\newtheorem{theorem}{Theorem}
\newtheorem{remark}{Remark}
\newtheorem{lemma}{Lemma}
\begin{document}

\title{Defining Spatial Security Outage Probability for Exposure Region Based Beamforming}

\author{
\IEEEauthorblockN{
Yuanrui Zhang\IEEEauthorrefmark{1},
Youngwook Ko\IEEEauthorrefmark{1},
Roger Woods\IEEEauthorrefmark{1}, 
Alan Marshall\IEEEauthorrefmark{4}
}

\IEEEauthorblockA{
\IEEEauthorrefmark{1}
ECIT, Queen's University Belfast\\
Belfast, UK\\ 
Email: \{yzhang31, r.woods, y.ko\}@qub.ac.uk
}

\IEEEauthorblockA{
\IEEEauthorrefmark{4}
Electrical Engineering and Electronics, University of Liverpool\\
Liverpool, UK\\
Email: Alan.Marshall@liverpool.ac.uk
}
}


\maketitle
\begin{abstract}

With increasing number of antennae in base stations, there is considerable interest in using beamfomining to improve physical layer security, by creating an `exposure region' that enhances the received signal quality for a legitimate user and reduces the possibility of leaking information to a randomly located passive eavesdropper. The paper formalises this concept by proposing a novel definition for the security level of such a legitimate transmission, called the `Spatial Secrecy Outage Probability' (SSOP). By performing a theoretical and numerical analysis, it is shown how the antenna array parameters can affect the SSOP and its analytic upper bound. Whilst this approach may be applied to any array type and any fading channel model, it is shown here how the security performance of a uniform linear array varies in a Rician fading channel by examining the analytic SSOP upper bound.

\end{abstract}

\begin{IEEEkeywords}
Physical layer security, beamforming, exposure region, spatial secrecy outage probability, uniform linear array.
\end{IEEEkeywords}

\IEEEpeerreviewmaketitle

\section{Introduction}\label{sec:intro}


With the proliferation of wireless communications, there is a strong need to provide improved level of security at the physical layer to complement conventional encryption techniques in the higher layers.
Since Wyner established the wiretap channel model and showed the possibility of approaching Shannon's perfect secrecy without a secret key\,\cite{wyner1975wire}, this has been since extended to various channels, such as non-degraded discrete memoryless broadcast channels\,\cite{csiszar1978broadcast}, Gaussian wiretap channels\,\cite{leung1978gaussian}, fading channels\,\cite{barros2006secrecy,bloch2008wireless} and multiple antenna channels\,\cite{shafiee2007achievable,khisti2010secure,khisti2010secure2}.

Wyner's wiretap channel model requires that the legitimate user should have a better channel than the adversarial user, even only for a fraction of realizations in fading channels\,\cite{barros2006secrecy}.
Different users' locations can provide distinction between their channels due to the large-scale path loss relying on user's distance to the transmitter.
However, the role of location in information-theoretic security research has been largely ignored, presumably as users are often assumed to be randomly distributed. 
With the aid of the stochastic geometry theory, the distribution of the random users' locations can be modeled via Poisson point process (PPP),\,\cite{haenggi2009stochastic,chiu2013stochastic} thus encouraging the utilization of location in wireless security. 
For example, ‘ArrayTrack’\,\cite{xiong2013arraytrack} shows how improving granularity can be used to  enhance security\,\cite{xiong2013securearray}. 

This paper mainly investigates the security threat posed by a particular adversarial behavior, i.e., passive eavesdropping, with the classical model where the transmitter (Alice) wishes to transmit to the legitimate user (Bob) in presence of PPP distributed eavesdroppers (Eves).
Alice is equipped with antenna array and performs \textit{beamforming} to enlarge the difference between Bob's and Eve's channels.
Beamforming has been shown to achieve the secrecy capacity in multiple-input-single-output (MISO) channels\,\cite{shafiee2007achievable,khisti2010secure} and has provoked a lot of research\,\cite{mukherjee2010principles,shiu2011physical}.
Essentially, it is a spatial filter that focuses energy in a certain direction or suppresses energy in other directions\,\cite{van1988beamforming}, thereby allowing distinguishing between locations that are either secure or insecure, for the transmission to Bob.
This is important as many applications require security inside an enclosed area, such as different zones in an exhibition hall or different assembly lines in a factory.

In our previous work\,\cite{mypaper,mypaper2}, beamforming is used to create an `exposure region' (ER) to protect the transmission to the legitimate user. 
However, the ER in\,\cite{mypaper,mypaper2} is not based on information-theoretic parameters and lacks the theoretical analysis.
Alternatively, in this paper, the ER is defined by the physical region where any PPP distributed Eve causes secrecy outage to the legitimate transmission in a general channel model, i.e., Rician fading channel.
Then, the spatial secrecy outage probability (SSOP) is defined for the ER based beamforming, which measures the security level of the legitimate transmission based on the ER; this enables an investigation of the role of the array parameters, e.g. number of elements and the direction of emission (DoE) angle, on  physical layer security.

Related work has attempted to create different sorts of physical regions to combat the randomness of both Eve's location and of the fading channel, e.g.,\,\cite{4595864,li2013security,wang2015jamming}.
Whilst the term `exposure region' was coined in\,\cite{4595864}, it referred to received signal quality instead of secrecy outage and lacked information-theoretic analysis whereas in other work,\,\cite{li2013security,wang2015jamming}, the antenna array is overlooked in the information-theoretic analysis.
Since beamforming is performed via antenna arrays, the ER created using beamforming is highly related to the array parameters and can be controlled by changing the array parameters which in turn, affects the SSOP.

The main contributions of this paper are
\begin{itemize}
	\item Definition of the new term called SSOP which is based on the ER where randomly located Eves cause secrecy outage and which measures the security performance in fading channel from the spatial perspective and links with array parameters; it can be applied to existing research to provide information-theoretic analysis and enhanced security performance by taking array parameters into consideration; 
	\item A closed-form expression of the upper bound for the SSOP is obtained to facilitate the theoretical analysis of the security performance, applicable to any array type and fading channel model;
	\item Based on the SSOP, the first investigation of the security performance of ER based beamforming with the uniform linear array (ULA) in a  Rician fading channel with respect to the array parameters is presented. Numerical results reveals that in general, the SSOP increases dramatically as Bob's angle increases; when the number of elements in the array increases, the SSOP converges to a certain value depending on Bob's angle. 
	As for the upper bound, the numerical results show that it is tighter for a smaller number of elements.
\end{itemize}

The paper is organized as follows.
The related work to physical layer security from the physical region perspective is surveyed in Section\,\ref{sec:rw}. In Section\,\ref{sec:sys}, the system model and channel models are demonstrated whereas in Section\,\ref{sec:ERSSOP}, the ER is established, based on which the SSOP and its analytic upper bound are derived.
The SSOP for the ULA and for the Rician channel are analyzed in Sections\,\ref{sec:analysis} and,\ref{sec:analysis2} respectively, along with the tightness of the upper bound.
In Section\,\ref{sec:concl}, the conclusions are given.

\section{Related Work}\label{sec:rw}

Whenever Alice has knowledge of Bob's CSI, beamforming can be used to enhance the received signal quality around Bob and reduce the possibility of leaking information to Eve. As Eve's CSI is generally unknown to Alice, this requires the creation of a physical region either based on the traditional performance metrics, e.g., received power or signal-to-interference-plus-noise ratio\,\cite{4595864,1400008,5357443,sheth2009geo,anand2012strobe,6502515}, or information-theoretic parameters, such as secrecy outage probability (SOP)\,\cite{li2013security,wang2015jamming,sarma2013joint,li2012secure, zheng2014transmission}.

In\,\cite{1400008}, multiple arrays have been used to jointly create a region smaller than that of a single array by dividing the transmitted message and sending it out via multiple arrays in a time-division manner, so that only the user within the jointly created region can receive the complete message.
This idea was extended in\,\cite{4595864,5357443} by encrypting the transmitted message so that only the user within in the jointly created region could decrypt it, with interference sent on some arrays to reduce the effective coverage region. Multiple APs were used in\,\cite{sheth2009geo} to jointly perform beamforming with adaptive transmit power to reduce the joint physical region.

Whilst multiple arrays provide smaller regions, synchronization of the arrays and modifications to higher layer protocol are problematic\,\cite{1400008}.
In \cite{anand2012strobe}, the authors avoid this by using a single array to create a cross-layer design called a STROBE that inserts orthogonal interference which is transmitted simultaneously with the intended data stream, so that Eve cannot decode correctly while Bob remains unaffected by the interference. The work in\,\cite{6502515} designed a specific type of smart array that has two synthesized radiation patterns that can alternatively transmit in a time-division manner overlapping in Bob's direction to provide a full signal transmission whilst reducing signal quality to Eve.

The work based on the traditional performance metrics lacks an information-theoretic analysis, although in\,\cite{4595864,5357443}, the authors define the ER as a performance metric but not using information-theoretic parameters. Work on insecure and secure regions using the information-theoretic parameters has been undertaken on the compromised secrecy region (CSR)\,\cite{li2013security}, secrecy outage region (SOR)\,\cite{wang2015jamming} and vulnerability region (VR)\,\cite{sarma2013joint}, but defined by the region where a certain security goal is not achieved.
On the other hand, the secure regions in\,\cite{li2012secure, zheng2014transmission} are defined by certain security goal being guaranteed.
Despite of the difference in the definition of the physical regions, beamforming and/or artificial noise (AN) are used in the work that is based on information-theoretical parameters, either in the form of antenna arrays\,\cite{wang2015jamming,li2012secure,zheng2014transmission} or in the form of distributed antennas\,\cite{li2013security,sarma2013joint}.

Most reviewed work provides numerical approximations but not the closed-form formulation for these physical regions except\,\cite{wang2015jamming,zheng2014transmission}.
The closed-form formulation of the physical region or its upper bound in this paper can provide analysis with respect to the related aspects, such as array parameters, which can be potentially used for optimization towards higher level security.
In\,\cite{wang2015jamming}, the Rician fading is averaged and treated as a constant in a very large number of antennas systems.
Rayleigh fading generated from simple expressions is considered in\,\cite{zheng2014transmission}, but it is not practical to obtain Bob's location or CSI without the line-of-sight (LOS) component.
It is worth noticing that almost all the reviewed work does not investigate the role of the array parameters in the physical regions.
In\,\cite{yan2014line,yan2014secrecy}, the authors consider some aspect of the array parameters but do not focus on the analysis of the array parameters.

\section{System and Channel Models}\label{sec:sys}

Consider secure communications in wireless local access network,  where the access point (AP), Alice, communicates to a desired receiver (Bob) in presence of passive eavesdroppers (Eves), as shown in Fig.\,\ref{fig:chp3_demon}. 
Suppose that the AP is equipped with an ULA having $N$ antenna elements with a spacing $\mathit{\Delta d}=\lambda/2$, where $\lambda$ is the wavelength of the carrier signal\,\cite{adaptivearraysystems}.
Bob and Eves are assumed to have a single antenna and are simply referred to as a `general user' or a `user' hereinafter, unless otherwise stated.

\begin{figure}
\centering
\includegraphics[scale=1]{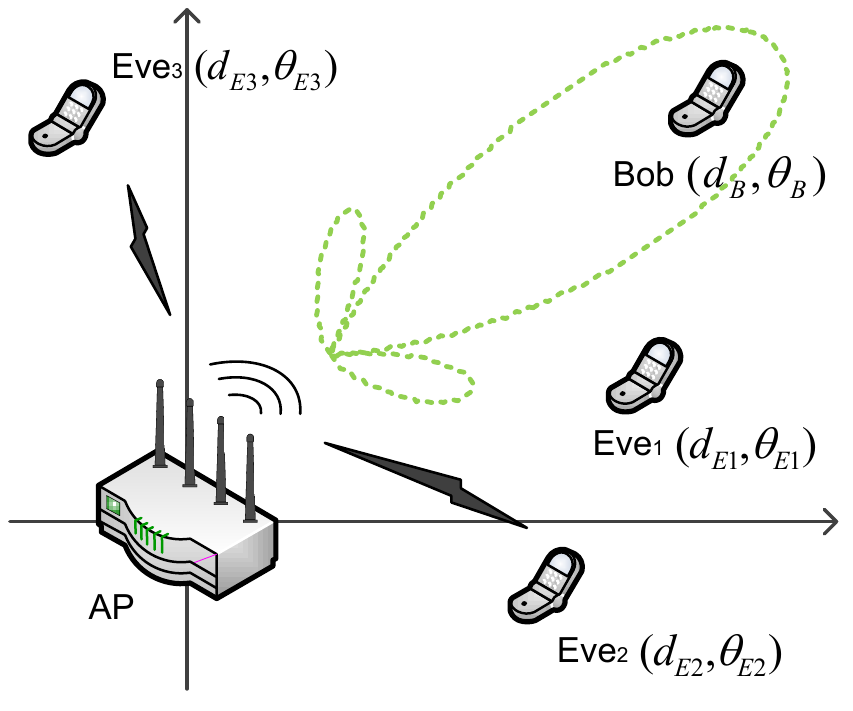}
\caption{An example of a wireless security communications system with one AP, Bob and homogeneous PPP distributed Eves}
\label{fig:chp3_demon}
\end{figure}

We consider that the AP is located at the origin point in polar coordinates, as shown in Fig.\,\ref{fig:chp3_demon}. 
Assume that the users are distributed by a homogeneous PPP, $\Phi_e$, with density $\lambda_e$\,\cite{ghogho2011physical}; the 
user's coordinates are denoted by $z=(d,\theta)$.
Thus, Bob's coordinates are denoted by $z_B=(d_B,\theta_B)$; the $i^{th}$ Eve's coordinate is $z_{Ei}=(d_{Ei},\theta_{Ei}), \forall i\in\mathbb{N}^+$.
The subscripts `$_B$' and `$_E$' are used for Bob and Eves hereinafter.

Given $z_B$, the AP transmits data only towards Bob in the presence of $l$ randomly distributed Eves in every transmit time interval. 
In particular, let $x$ be the modulated symbol with unit power, $\mathbb{E}[|x|^2]=1$, and $P_t$ be its transmit power. 
The transmitted vector, denoted by $\mathbf{u}$, is given by $\mathbf{u}=\sqrt{P_t}\mathbf{w}^*x$, where $\mathbf{w}$ is the beamforming weight vector, i.e.,
$\mathbf{w}=\mathbf{s}(\theta)/\sqrt{N}$, and $\mathbf{s}(\theta)$ is the array steering vector for the ULA,
\begin{align}\label{eq:chp2_steeringvector}
	\mathbf{s}(\theta)=[1,...,e^{-jk\mathit{\Delta d}\sin\theta(i-1)},...,e^{-jk\mathit{\Delta d}\sin\theta(N-1)}]^T,
\end{align}
where $\theta\in[0,2\pi]$ and $k=2\pi/\lambda$.
When $\theta$ is set to $\theta_B$, i.e., $\mathbf{w}=\mathbf{s}(\theta_B)/\sqrt{N}$, the received power at Bob is maximized.
For the 2.4\,GHz Wi-Fi signal, $\lambda=12.5$\,cm.

For a general user at $z=(d,\theta)$, denoted by $\mathbf{h}(z)$, the channel gain vector between the AP and user at $z$ can be decomposed into LOS and non-LOS (NLOS) components, and is expressed by
\begin{align}\label{eq:chp3_CH_Gain}
	\mathbf{h}(z)=d^{-\beta/2}\big(\sqrt{\frac{K}{K+1}}\mathbf{s}(\theta)+\sqrt{\frac{1}{K+1}}\mathbf{g}\big),
\end{align}
where $d^{-\beta/2}$ denotes the large-scale path loss at the distance, $d$, and the path loss exponent $\beta\in[2,6]$; $\mathbf{g}=[g_1,g_2,...,g_N]^T$ represents the NLOS component where every $g_i$ entry is independent and identically distributed (i.i.d.) circularly-symmetric complex Gaussian random variable with zero mean and unit variance, i.e., $g_i\sim\mathcal{CN}(0,1)$; $K$ denotes the factor of the Rician fading.
According to (\ref{eq:chp3_CH_Gain}), the received signal at $z$ can be obtained by
\begin{align}\label{eq:chp3_RX_Signal}
	r(z)=\mathbf{h}^T(z)\mathbf{u}+n_W=\sqrt{\frac{P_t}{d^{\beta}}}\tilde{h}x+n_W,
\end{align}
where $n_W$ is the additive white Gaussian noise with zero mean and variance $\sigma_n^2$ and $\tilde{h}$ is the equivalent channel factor, which is given by
\begin{align}\label{eq:chp3_h_tilde_Ri}
	\tilde{h}&=\big(\sqrt{\frac{K}{K+1}}\mathbf{s}^T(\theta)+\sqrt{\frac{1}{K+1}}\mathbf{g}^T\big)\frac{\mathbf{s}^*(\theta_B)}{\sqrt{N}} \nonumber \\ 
			     &=\sqrt{\frac{K}{K+1}}G(\theta,\theta_B)+\sqrt{\frac{1}{K+1}}\frac{\mathbf{g}^T\mathbf{s}^*(\theta_B)}{\sqrt{N}},
\end{align}
where $G(\theta,\theta_B)$ is the array factor and is given by
\begin{align}\label{eq:chp3_AF_ULA}
G(\theta,\theta_B)&=\frac{1}{\sqrt{N}}\sum_{i=1}^N e^{jk\mathit{\Delta d}(\sin\theta_B-\sin\theta)(i-1)} \nonumber \\
&=\frac{1}{\sqrt{N}}\frac{1-e^{jNk\mathit{\Delta d}(\sin\theta_B-\sin\theta)}}{1-e^{jk\mathit{\Delta d}(\sin\theta_B-\sin\theta)}}.
\end{align}

\begin{remark}\label{prop:chp3_theta_B_range}
The array patterns for $G(\theta,\theta_B)$ at $\pm(\theta_B\pm\pi)$ are symmetric to each other.
Due to this symmetry property of the ULA,, it suffices to study $G(\theta,\theta_B)$ only in $\theta_B\in[0,\frac{\pi}{2}]$.
\end{remark}

Denoted by $\gamma(z)$, the received SNR at $z$, can be found from (\ref{eq:chp3_RX_Signal}), 
\begin{align}\label{eq:chp3_SNR}
	\gamma(z)=\frac{P_t}{\sigma_n^2d^{\beta}}|\tilde{h}|^2.
\end{align}
The channel capacity of the general user at $z$ can be given by
\begin{align}\label{eq:chp3_channelcapacity}
	C(z)=\log_2 [1+\gamma(z)]=\log_2 \Big(1+\frac{P_t}{\sigma_n^2d^{\beta}}|\tilde{h}|^2\Big).
\end{align}
For convenience, let $C_B=C(z_B)$ and $C_{Ei}=C(z_{Ei})$ denote the channel capacities of Bob and the $i$-th Eve hereinafter.
Due to the fact that $|\tilde{h}|^2$ scales with $G(\theta,\theta_B)$, a proper design of $G(\theta,\theta_B)$ can improve $C_B$ while decreasing $C_{Ei}$.

\section{Exposure Region and Spatial Secrecy Outage Probability}\label{sec:ERSSOP}

From (\ref{eq:chp3_channelcapacity}), it can be noticed that $C_{Ei}$ relies on random location $z_{Ei}$ and the small-scale fading $\tilde{h}$.
As a result, one or more Eves could have a higher channel capacity than a certain threshold, leading to the secrecy outage\,\cite{zhou2011rethinking}.
For given Eves' random locations, the exposure region (ER) is mathematically formulated to characterize the above secrecy outage event.
Then the SOP with respect to the ER is evaluated as a measure of the security level.
An upper bound expression for the SSOP is derived to facilitate theoretical analysis.

\subsection{Exposure Region}
\label{chp3:metric:ER}

Let $R_B$ and $R_s$ be the rate of the transmitted codewords and the rate of the confidential information, respectively, then for fixed $R_B$ and $R_s$, a reliable transmission to Bob can be guaranteed when $C_B\geq R_B$.
Secrecy outage event occurs when Eve's channel capacity is higher than the difference $R_B-R_s$ and the probability of such an event is the SOP\,\cite{zhou2011rethinking}.

The geometric meaning is lacking in the above definition of SOP in\,\cite{zhou2011rethinking}.
To characterize the secrecy outage event for the PPP distributed Eves, the ER, denoted by $\Theta$, is defined by the geometric region only where Eves cause the secrecy outage event, i.e., $C_{Ei}>R_B-R_s, \exists z_{Ei}=(d,\theta)\in \Theta$. 
Accordingly, $\Theta$ can be represented by
\begin{align}\label{eq:chp3_ERdef}
	\Theta=\{z:\;C(z)>R_B-R_s\}.
\end{align}
The $i^{th}$ Eve will cause secrecy outage, if and only if $z_{Ei}\in\Theta$.
At the same time, $C_B\geq R_B$ needs to be guaranteed. 

Substituting (\ref{eq:chp3_channelcapacity}) into (\ref{eq:chp3_ERdef}) and rearranging $d$ and $\theta$, $\Theta$ can be transformed into
\begin{align}\label{eq:chp3_erdefinition2}
	\Theta=\{z=(d,\theta):\;d<D(\theta)\},
\end{align}
where 
\begin{align}\label{eq:chp3_ERboundary}
	D(\theta)=\Big[{\frac{P_t|\tilde{h}|^2}{\sigma_n^2(2^{R_B-R_S}-1)}}\Big]^{\frac{1}{\beta}},
\end{align}
$D(\theta)$ is a function only of $\theta$ for a given $\theta_B$ and the contour of $\Theta$.

All locations within $D(\theta)$ have $C(z)>R_B-R_s$, giving a clear geometric meaning, as shown in Fig.\,\ref{fig:chp3_ER_illustration}.
It can be shown from (\ref{eq:chp3_ERboundary}) that $D(\theta)$ (i.e., the shape of $\Theta$) is mainly determined by $|\tilde{h}|^2$.
Thus, $\Theta$ is a dynamic region with shifting boundary whenever $|\tilde{h}|^2$ varies.
When the channel is deterministic, $D(\theta)$ is also deterministic.

\begin{figure}
\centering
\includegraphics[width=3.4in]{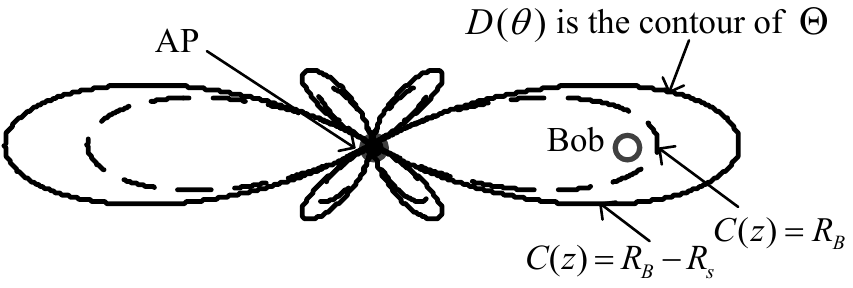}
\caption{Illustration of the ER $\Theta$. $D(\theta)$ is the contour of $\Theta$ for given $\theta_B$, which corresponds to $C(z)=R_B-R_s$; Bob should be within the curve $C(z)=R_B$ to guarantee a reliable transmission.}
\label{fig:chp3_ER_illustration}
\end{figure}

Denoted by $A$, the quantity of $\Theta$ can be measured by the inner area of $D(\theta)$.
Using (\ref{eq:chp3_ERboundary}), $A$ in polar coordinates can be expressed by,
\begin{align}\label{eq:chp3_A1}
A&=\frac{1}{2}\int_0^{2\pi}D^2(\theta)\,\mathrm{d}\theta \nonumber \\
 &=\frac{1}{2}\Big[{\frac{P_t}{\sigma_n^2(2^{R_B-R_S}-1)}}\Big]^{\frac{2}{\beta}}
\int_0^{2\pi}(|\tilde{h}|^2)^{\frac{2}{\beta}}\,\mathrm{d}\theta.
\end{align}
$A$ is measured in\,m$^2$ and depends on $|\tilde{h}|^2$ which can be a function of $G(\theta,\theta_B)$ in the following.
\begin{lemma}\label{le:chp3_h_tilde_squre}
$|\tilde{h}|^2$ can be decomposed by
\begin{align}\label{eq:chp3_h_tilde_square}
	|\tilde{h}|^2 =\frac{KG^2(\theta,\theta_B)}{K+1}+\frac{g_{Re}^2+g_{Im}^2}{K+1}+\frac{2\sqrt{K}G(\theta,\theta_B)}{K+1}g_{Re},
\end{align}
where $g_{Re}$ and $g_{Im}$ are the real and imaginary part of a complex Gaussian random variable $g\sim\mathcal{CN}(0,1)$.
So, $g_{Re}$ and $g_{Im}$ are joint normal distributed variables, i.e., $g_{Re}, g_{Im}\sim \mathcal{N}(0,\frac{1}{2})$.
\end{lemma}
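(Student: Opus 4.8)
The plan is to collapse the NLOS part of $\tilde{h}$ in (\ref{eq:chp3_h_tilde_Ri}) to a single standard complex Gaussian and then expand $|\tilde{h}|^2$ term by term. Define $g := \mathbf{g}^T\mathbf{s}^*(\theta_B)/\sqrt{N}$, so that $\tilde{h} = \sqrt{\tfrac{K}{K+1}}\,G(\theta,\theta_B) + \sqrt{\tfrac{1}{K+1}}\,g$. The first step is to verify $g\sim\mathcal{CN}(0,1)$: being a linear combination of the i.i.d.\ entries $g_i\sim\mathcal{CN}(0,1)$, the variable $g$ is again a zero-mean circularly symmetric complex Gaussian, and since every entry of $\mathbf{s}(\theta_B)$ in (\ref{eq:chp2_steeringvector}) has unit modulus its variance is $\frac{1}{N}\sum_{i=1}^{N}\big|[\mathbf{s}(\theta_B)]_i\big|^{2} = 1$. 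Hence $g$ is precisely the $g\sim\mathcal{CN}(0,1)$ named in the statement, and circular symmetry gives $g_{Re},g_{Im}\sim\mathcal{N}(0,\tfrac12)$ (jointly normal, in fact independent).

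The second step disposes of the phase of the array factor. From (\ref{eq:chp3_AF_ULA}), $G(\theta,\theta_B)$ is in general complex, but it can be written as $G(\theta,\theta_B)=e^{j\phi}\,|G(\theta,\theta_B)|$ with $\phi=\phi(\theta,\theta_B)$ real; since $e^{-j\phi}g$ has the same distribution as $g$ by circular symmetry, rotating both terms of $\tilde{h}$ by $e^{-j\phi}$ leaves $|\tilde{h}|^2$ unchanged. It therefore suffices to carry out the computation with $G(\theta,\theta_B)$ replaced by the real nonnegative quantity $|G(\theta,\theta_B)|$, which is what the symbol $G(\theta,\theta_B)$ denotes in (\ref{eq:chp3_h_tilde_square}).

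The third step is the algebra: with $G:=G(\theta,\theta_B)$ real,
\[
|\tilde{h}|^2 = \tilde{h}\,\tilde{h}^{*} = \frac{K}{K+1}G^{2} + \frac{1}{K+1}|g|^{2} + \frac{2\sqrt{K}}{K+1}\,G\,\mathrm{Re}(g),
\]
and substituting $|g|^{2}=g_{Re}^{2}+g_{Im}^{2}$ together with $\mathrm{Re}(g)=g_{Re}$ yields (\ref{eq:chp3_h_tilde_square}).

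The step I expect to be the main obstacle is the first one --- establishing that the aggregated NLOS coefficient is a \emph{unit-variance}, circularly symmetric complex Gaussian, because the normalization hinges on the unit-modulus entries of the steering vector (equivalently on $\|\mathbf{w}\|=1$) and because one must also argue that properness is preserved under the linear map $\mathbf{g}\mapsto\mathbf{g}^T\mathbf{s}^*(\theta_B)/\sqrt{N}$ before the phase-absorption argument can be invoked. Once those points are settled, the remaining expansion is routine.
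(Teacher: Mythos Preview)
Your proof is correct and follows the same route as the paper: define $g := \mathbf{g}^T\mathbf{s}^*(\theta_B)/\sqrt{N}$, observe $g\sim\mathcal{CN}(0,1)$, split $g=g_{Re}+jg_{Im}$, and expand $|\tilde{h}|^2$. Your phase-absorption step (step two) is actually \emph{more} careful than the paper, which simply writes $\tilde{h}=\sqrt{\tfrac{K}{K+1}}\,G(\theta,\theta_B)+\sqrt{\tfrac{1}{K+1}}\,g_{Re}+j\sqrt{\tfrac{1}{K+1}}\,g_{Im}$ and thereby treats $G(\theta,\theta_B)$ as real without comment; contrary to your expectation, the paper dispatches step one in a single sentence, so the normalization is not the bottleneck.
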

\begin{proof}
In (\ref{eq:chp3_h_tilde_Ri}), let $g$ be the following substitution.
\begin{align}
	g=\frac{\mathbf{g}^T\mathbf{s}^*(\theta_B)}{\sqrt{N}}
\end{align}
where $\mathbf{s}^H(\theta_B)$ is deterministic and each element of $\mathbf{g}$ is an i.i.d. complex Gaussian random variable with zero mean and unit variance.
Therefore, $g$ is a complex Gaussian variable, $g\sim\mathcal{CN}(0,1)$.

Let $g_{Re}$ and $g_{Im}$ denote the real and imaginary part of $g$, where $g_{Re}$ and $g_{Im}$ are joint normal variables, i.e., $g_{Re}, g_{Im}\sim \mathcal{N}(0,\frac{1}{2})$.
Thus, 
\begin{align}
	\tilde{h}=\sqrt{\frac{K}{K+1}}G(\theta,\theta_B)+\sqrt{\frac{1}{K+1}}g_{Re}+j\sqrt{\frac{1}{K+1}}g_{Im}.
\end{align}
Then, $|\tilde{h}|^2$ can be obtained by
\begin{align}
	|\tilde{h}|^2 &= \Big[\sqrt{\frac{K}{K+1}}G(\theta,\theta_B)+\sqrt{\frac{1}{K+1}}g_{Re}\Big]^2+ \frac{1}{K+1}g_{Im}^2 \nonumber \\
	&=\frac{KG^2(\theta,\theta_B)}{K+1}+\frac{g_{Re}^2+g_{Im}^2}{K+1}+\frac{2\sqrt{K}G(\theta,\theta_B)g_{Re}}{K+1}.
\end{align}
\end{proof}

A reliable transmission is guaranteed for Bob, if Bob is inside the dashed curve in Fig.\,\ref{fig:chp3_ER_illustration}, i.e., $C_B>R_B$.
A secrecy outage event only occurs when $z_{Ei}\in\Theta$. 
Intuitively, given that Bob's reliable transmission is guaranteed, the smaller $A$ is, the smaller number of Eves are statistically located in $\Theta$, leading to less occurrence of the secrecy outage.

\subsection{Spatial Secrecy Outage Probability}
\label{chp3:metric:ssop}

Any Eve at $z_{Ei}\in\Theta$ causes $C_{Ei}>R_B-R_s$ and this is referred to as a spatial secrecy outage (SSO) event with respect to the ER.
The \textit{spatial secrecy outage probability} (SSOP) can be defined by the probability that any Eve is located inside $\Theta$.
To the best of our knowledge, the SSOP provides distinctive measure of the ER based security over the conventional SOP which does not have dynamic geometric implication; the SSOP emphasizes the secrecy outage caused by the spatially distributed Eves within a dynamic $\Theta$.

We quantify the SSOP, denoted by $p$, to measure the secrecy performance.
Particularly for given PPP-distributed Eves, the probability that $m$ Eves are located inside $\Theta$ (with its area quantity $A$) is given by
\begin{align}\label{eq:chp3_PPP}
	\text{Prob}\{m\;\text{Eves in}\;\Theta\}=\frac{(\lambda_eA)^m}{m!}e^{-\lambda_eA}.
\end{align}
Using (\ref{eq:chp3_A1}) and (\ref{eq:chp3_PPP}), $p$ can be quantitatively measured by referring to `no secrecy outage' event that no Eves are located inside $\Theta$ and is given by
\begin{align}\label{eq:chp3_SSOP}
	p=1-\text{Prob}\{0\;\text{Eve in}\;\Theta\}=1-e^{-\lambda_eA}.
\end{align}
It can be seen from (\ref{eq:chp3_SSOP}) that the smaller $p$ is, the less the spatial secrecy outage occurs.
This results in the more secure transmission to Bob.
For a given $\lambda_e$, $p$ decreases along with $A$.

\newcounter{MYtempeqncnt}
\begin{figure*}[!t]
\normalsize
\setcounter{MYtempeqncnt}{\value{equation}}
\setcounter{equation}{18}
\begin{numcases}{\bar{p}=}
1-\int_{-\infty}^{\infty}\int_{-\infty}^{\infty} \text{exp}\Big\{-\frac{\lambda_e}{2}c_0^{\frac{2}{\beta}}\int_0^{2\pi}\Big[\frac{KG^2(\theta,\theta_B)}{K+1} &\nonumber \\
 +\frac{x^2+y^2}{K+1}+\frac{2\sqrt{K}G(\theta,\theta_B)}{K+1}x\Big]^{\frac{2}{\beta}}\,\mathrm{d}\theta\Big\} \frac{e^{-(x^2+y^2)}}{\pi} \,\mathrm{d}x\,\mathrm{d}y, & $K\in(0,\infty)$   \label{eq:chp3_meanSSOP_Ri_2} \\
1-\text{exp}\Big\{-\frac{\lambda_e}{2}c_0^{\frac{2}{\beta}}\int_0^{2\pi}[G^2(\theta,\theta_B)]^{\frac{2}{\beta}}\,\mathrm{d}\theta\Big\}, & $K=\infty$ \label{eq:chp3_SSOP_De} \\
1-\int_{-\infty}^{\infty}\int_{-\infty}^{\infty} \text{exp}\Big\{-\lambda_e\pi c_0^{\frac{2}{\beta}}(x^2+y^2)^{\frac{2}{\beta}}\Big\} \frac{e^{-(x^2+y^2)}}{\pi} \,\mathrm{d}x\,\mathrm{d}y, & $K=0$,\label{eq:chp3_meanSSOP_Ra}
\end{numcases}
\addtocounter{equation}{1}
\begin{align}\label{eq:chp3_meanSSOP_Ri_temp}
	\bar{p}=\mathbb{E}_{g_{Re},g_{Im}}[p]=1-\int_{-\infty}^{\infty}\int_{-\infty}^{\infty} \text{exp}\Big\{-\frac{\lambda_e}{2}c_0^{\frac{2}{\beta}}\int_0^{2\pi}\Big[\frac{KG^2(\theta,\theta_B)+x^2+y^2+2\sqrt{K}G(\theta,\theta_B)x}{K+1}\Big]^{\frac{2}{\beta}}\,\mathrm{d}\theta\Big\} f_{g_{Re}}(x)f_{g_{Im}}(y) \,\mathrm{d}x\,\mathrm{d}y.
\end{align}
\setcounter{equation}{\value{MYtempeqncnt}}
\hrulefill
\vspace*{4pt}
\end{figure*}

Notice that $p$ in (\ref{eq:chp3_SSOP}) depends on the equivalent channel factor $\tilde{h}$ via $A$.
Due to the fact that $\tilde{h}$ is random channel fading, it is more interesting to study the expectation of $p$, which reflects the averaged SSOP, which is denoted by $\bar{p}$ and can be calculated by
\addtocounter{equation}{0}
\begin{align}\label{eq:chp3_meanSSOP_Ri_0}
	\bar{p}&=\mathbb{E}_{|\tilde{h}|}[p]=1-\mathbb{E}_{|\tilde{h}|}[e^{-\lambda_eA}].
\end{align}

\begin{theorem}\label{th:chp3_ssop}
Given $A$ in (\ref{eq:chp3_A1}), $\bar{p}$ in (\ref{eq:chp3_meanSSOP_Ri_0}) can be expressed by (\ref{eq:chp3_meanSSOP_Ri_2}) to (\ref{eq:chp3_meanSSOP_Ra}) at the top of the page, 
where $\lambda_e$ is the density of Eves, $c_0=\frac{P_t}{\sigma_n^2(2^{R_B-R_S}-1)}$ is deterministic, $\beta$ is the path loss exponent, $K$ is the Rician factor, $G(\theta,\theta_B)$ is the array factor when the DoE angle is Bob's angle $\theta_B$.
\end{theorem}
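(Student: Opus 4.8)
The plan is to unwind the definition of $\bar p$ in (\ref{eq:chp3_meanSSOP_Ri_0}) by inserting the area expression (\ref{eq:chp3_A1}) together with the decomposition of $|\tilde h|^2$ from Lemma \ref{le:chp3_h_tilde_squre}, and then to perform the channel expectation against the Gaussian law of the fading. First I would note that, for a fixed direction of emission $\theta_B$ and fixed array geometry, the only randomness entering $A$ is the post-beamforming NLOS coefficient $g=\mathbf{g}^T\mathbf{s}^*(\theta_B)/\sqrt N$, i.e.\ the pair $(g_{Re},g_{Im})$; this is exactly what Lemma \ref{le:chp3_h_tilde_squre} establishes. Hence $\mathbb{E}_{|\tilde h|}[\,\cdot\,]$ can be replaced by $\mathbb{E}_{g_{Re},g_{Im}}[\,\cdot\,]$, keeping in mind that inside the $\theta$-integral of (\ref{eq:chp3_A1}) the variables $g_{Re},g_{Im}$ are constant in $\theta$ whereas $G(\theta,\theta_B)$ is not.

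For the generic case $K\in(0,\infty)$ I would substitute (\ref{eq:chp3_h_tilde_square}) into (\ref{eq:chp3_A1}) to obtain $\lambda_e A=\tfrac{\lambda_e}{2}c_0^{2/\beta}\int_0^{2\pi}\big[\tfrac{KG^2(\theta,\theta_B)+g_{Re}^2+g_{Im}^2+2\sqrt{K}G(\theta,\theta_B)g_{Re}}{K+1}\big]^{2/\beta}\,\mathrm{d}\theta$, and then write $\bar p=1-\mathbb{E}_{g_{Re},g_{Im}}[e^{-\lambda_e A}]$ as a double integral over $(x,y)\in\mathbb{R}^2$ of $e^{-\lambda_e A(x,y)}$ weighted by the joint density $f_{g_{Re}}(x)f_{g_{Im}}(y)$. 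Since $g\sim\mathcal{CN}(0,1)$ forces $g_{Re},g_{Im}\sim\mathcal{N}(0,\tfrac12)$ independently, that joint density is $\tfrac{1}{\pi}e^{-(x^2+y^2)}$; inserting it and distributing the division by $K+1$ over the three summands in the bracket gives (\ref{eq:chp3_meanSSOP_Ri_temp}), hence (\ref{eq:chp3_meanSSOP_Ri_2}).

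The two boundary cases are limits in the Rician factor. As $K\to\infty$, $\tfrac{K}{K+1}\to1$ while $\tfrac{1}{K+1}\to0$ and $\tfrac{\sqrt K}{K+1}\to0$, so $|\tilde h|^2\to G^2(\theta,\theta_B)$ is deterministic; the integrand of the double integral then does not depend on $(x,y)$, and since $\int_{\mathbb{R}^2}\tfrac{1}{\pi}e^{-(x^2+y^2)}\,\mathrm{d}x\,\mathrm{d}y=1$ it collapses to (\ref{eq:chp3_SSOP_De}). As $K\to0$, $\tfrac{K}{K+1}\to0$ and $\tfrac{\sqrt K}{K+1}\to0$ while $\tfrac{1}{K+1}\to1$, so $|\tilde h|^2\to g_{Re}^2+g_{Im}^2$, which is independent of $\theta$; then $\int_0^{2\pi}(|\tilde h|^2)^{2/\beta}\,\mathrm{d}\theta=2\pi(x^2+y^2)^{2/\beta}$, so that $\lambda_e A=\lambda_e\pi c_0^{2/\beta}(x^2+y^2)^{2/\beta}$ and the double integral becomes (\ref{eq:chp3_meanSSOP_Ra}). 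To make these limit interchanges rigorous I would invoke dominated convergence, using $0\le e^{-\lambda_e A}\le1$ and the Gaussian density as the integrable majorant.

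The manipulations are essentially bookkeeping; the one point that needs care is the reduction of the full channel expectation to an expectation over the single pair $(g_{Re},g_{Im})$ --- recognising that after beamforming all the fading randomness that shapes the exposure region $\Theta$ is carried by one complex Gaussian --- which is precisely the role of Lemma \ref{le:chp3_h_tilde_squre}. Remark \ref{prop:chp3_theta_B_range} plays no part here, since the $\theta$-integral is taken over the whole circle irrespective of any restriction on $\theta_B$.
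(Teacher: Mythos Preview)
Your proposal is correct and follows essentially the same route as the paper: substitute Lemma~\ref{le:chp3_h_tilde_squre} into the area formula~(\ref{eq:chp3_A1}), replace the channel expectation by the double integral against the joint Gaussian density $\tfrac{1}{\pi}e^{-(x^2+y^2)}$ to reach (\ref{eq:chp3_meanSSOP_Ri_temp})/(\ref{eq:chp3_meanSSOP_Ri_2}), and then pass to the limits $K\to\infty$ and $K\to0$ for the two boundary cases. The only difference is that you supply a dominated-convergence justification for interchanging the limit in $K$ with the $(x,y)$-integral, which the paper omits.
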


\begin{proof}
First, substituting $c_0$ into (\ref{eq:chp3_A1}), $A$ can be simplified into
\addtocounter{equation}{3}
\begin{align}\label{eq:chp3_A2}
A=\frac{1}{2}\int_0^{2\pi}(c_0|\tilde{h}|^2)^{\frac{2}{\beta}}\,\mathrm{d}\theta.
\end{align}

Substituting (\ref{eq:chp3_h_tilde_square}) and (\ref{eq:chp3_A2}) into (\ref{eq:chp3_meanSSOP_Ri_0}), $\bar{p}$ can be calculated by (\ref{eq:chp3_meanSSOP_Ri_temp}) at the top of the page.
For normal distribution,
\addtocounter{equation}{1}
\begin{align}
	 f_{g_{Re}}(x)&=\frac{1}{\sqrt{\pi}}e^{-x^2} \label{eq:chp3_normal_pdf_x},\\
	 f_{g_{Im}}(y)&=\frac{1}{\sqrt{\pi}}e^{-y^2} \label{eq:chp3_normal_pdf_y}.
\end{align}
(\ref{eq:chp3_meanSSOP_Ri_2}) can be obtained by substituting (\ref{eq:chp3_normal_pdf_x}) and (\ref{eq:chp3_normal_pdf_y}) into (\ref{eq:chp3_meanSSOP_Ri_temp}).
Take the limit of $K\to\infty$ and $K\to 0$, (\ref{eq:chp3_SSOP_De}) and (\ref{eq:chp3_meanSSOP_Ra}) can be obtained, respectively.
Thus, the proof is completed.
\end{proof}

It is worth pointing out that for the deterministic channel ($K\to\infty$), $\bar{p}$ in (\ref{eq:chp3_SSOP_De}) is mainly decided by $G(\theta,\theta_B)$, while for the Rayleigh channel ($K=0$),  $\bar{p}$ in (\ref{eq:chp3_meanSSOP_Ra}) is shown not to contain $G(\theta,\theta_B)$, as there is no LOS component in Rayleigh fading channel.
$\bar{p}$ in Theorem\,\ref{th:chp3_ssop} is complex and can be numerically calculated.
However, it is not tractable to obtain in closed-form expression, except for the deterministic channel when $\beta=2$.
In the next subsection, upper bound expression for $\bar{p}$ will be derived in closed-form to facilitate detailed theoretical analysis.

\subsection{Upper Bound Expression for Averaged SSOP}
\label{chp3:metric:bounds}

To obtain the analytic upper bound expression, consider two major obstacles.
First, let $X_{\theta}=c_0|\tilde{h}|^2$.
$A$ in (\ref{eq:chp3_A1}) can be written in terms of $X_{\theta}$ as
\begin{align}\label{eq:chp3_A3}
A=\frac{1}{2}\int_0^{2\pi}X_{\theta}^{\frac{2}{\beta}}\,\mathrm{d}\theta.
\end{align}
$X_{\theta}$ relies on the array factor $G(\theta,\theta_B)$.
It is not straightforward to solve the integral when $\beta>2$.
The other obstacle is that $\mathbb{E}_{\tilde{h}}[e^{-\lambda_eA}]$ in (\ref{eq:chp3_meanSSOP_Ri_0}) is not mathematically tractable due to the composite array factor and Rician fading channels.

To overcome the aforementioned obstacles, we aim to obtain the moments of $|\tilde{h}|^2$.
Denoted by $\bar{p}_{up}$, the upper bound for $\bar{p}$ can be obtained via the moments of $|\tilde{h}|^2$ using two instances of Jensen's Inequality.
\begin{align}\label{eq:chp3_JI_1}
	\mathbb{E}[e^X]\geq e^{\mathbb{E}[X]},
\end{align}
where $X$ is a random variable.
The equality holds if and only if $X$ is a deterministic value.
The other one involved is expressed by
\begin{align}\label{eq:chp3_JI_2}
  \mathbb{E}[X^{\frac{2}{\beta}}]\leq (\mathbb{E}[X])^{\frac{2}{\beta}},
\end{align}
where $X$ is a random variable and $\beta\geq 2$.
The equality holds when $\beta=2$ for any $X$.

\begin{theorem}\label{th:sec4_p_up}
For given $\lambda_e$ and $K$, $\bar{p}_{up}$ can be derived using (\ref{eq:chp3_JI_1}) and (\ref{eq:chp3_JI_2}) and is expressed by
\begin{numcases}{\bar{p}_{up}=}
1-\text{exp}\Big\{-\lambda_e\pi c_0^{\frac{2}{\beta}}\Big[\frac{KA_0}{2\pi(K+1)} \nonumber \\
+\frac{1}{(K+1)}\Big]^{\frac{2}{\beta}}\Big\}, \quad\quad\quad\quad\quad\quad\, K\in(0,\infty)   \label{eq:chp3_meanSSOP_up_2} \\
1-\text{exp}\Big[-\lambda_e\pi \Big(\frac{c_0}{2\pi}A_0 \Big)^{\frac{2}{\beta}} \Big],\quad\quad K=\infty \label{eq:chp3_SSOP_De_up} \\
1-\text{exp}(-\lambda_e\pi c_0^{\frac{2}{\beta}}), \quad\quad\quad\quad\quad\quad K=0,\label{eq:chp3_SSOP_Ra_up}
\end{numcases}
where $A_0$ denotes the pattern area and is given by,
\begin{align}
	A_0&=\int_0^{2\pi} G^2(\theta,\theta_B)\,\mathrm{d}\theta \label{eq:chp3_A_0} \\
	   &=2\pi+4\pi\sum_{n=1}^{N-1} \frac{N-n}{N}J_0(k\mathit{\Delta d}n)\cos(k\mathit{\Delta d}n\sin\theta_B), \label{eq:chp3_A_0L}
\end{align}
where $J_0(x)$ is the Bessel function of the first kind with order zero, and $k=2\pi/\lambda$ is a constant.
\end{theorem}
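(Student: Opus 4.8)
The plan is to strip the two expectations in $\bar{p}=1-\mathbb{E}_{|\tilde{h}|}[e^{-\lambda_e A}]$ (cf.\ (\ref{eq:chp3_meanSSOP_Ri_0})) off one at a time, each with one of the Jensen inequalities (\ref{eq:chp3_JI_1})--(\ref{eq:chp3_JI_2}), and then to evaluate the pattern area $A_0$ in closed form separately. First I would apply (\ref{eq:chp3_JI_1}) with $X=-\lambda_e A$ to pull the expectation through the exponential, $\mathbb{E}_{|\tilde{h}|}[e^{-\lambda_e A}]\ge e^{-\lambda_e\mathbb{E}[A]}$, so that $\bar{p}\le 1-e^{-\lambda_e\mathbb{E}[A]}$ and the whole problem reduces to producing a clean upper bound on $\mathbb{E}[A]$.

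For that bound I would write $A=\frac12\int_0^{2\pi}X_\theta^{2/\beta}\,\mathrm{d}\theta$ with $X_\theta=c_0|\tilde{h}|^2$ as in (\ref{eq:chp3_A3}), swap expectation and integral (Tonelli; all integrands nonnegative) to get $\mathbb{E}[A]=\frac12\int_0^{2\pi}\mathbb{E}_{\tilde{h}}[X_\theta^{2/\beta}]\,\mathrm{d}\theta$, and then invoke (\ref{eq:chp3_JI_2}) \emph{twice}: once for each fixed $\theta$ over the fading, $\mathbb{E}_{\tilde{h}}[X_\theta^{2/\beta}]\le(\mathbb{E}_{\tilde{h}}[X_\theta])^{2/\beta}$, and once over $\theta$, treating $\frac{1}{2\pi}\int_0^{2\pi}(\cdot)\,\mathrm{d}\theta$ as an expectation against the uniform law on $[0,2\pi]$ (admissible since $t\mapsto t^{2/\beta}$ is concave for $\beta\ge 2$). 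Chaining the two and rewriting $\frac12\int_0^{2\pi}=\pi\cdot\frac{1}{2\pi}\int_0^{2\pi}$ gives $\mathbb{E}[A]\le\pi\big(\frac{1}{2\pi}\int_0^{2\pi}\mathbb{E}_{\tilde{h}}[X_\theta]\,\mathrm{d}\theta\big)^{2/\beta}$. I expect this renormalization to be the crux: the prefactor $\frac12$ in $A$ must be presented as $\pi\cdot\frac{1}{2\pi}$ so that the inner angular integral becomes a genuine average and the exponent $2/\beta$ can legitimately be pushed past it by (\ref{eq:chp3_JI_2}).

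Next I would compute $\mathbb{E}_{\tilde{h}}[X_\theta]=c_0\,\mathbb{E}[|\tilde{h}|^2]$ directly from Lemma~\ref{le:chp3_h_tilde_squre}: since $\mathbb{E}[g_{Re}]=0$ and $\mathbb{E}[g_{Re}^2]=\mathbb{E}[g_{Im}^2]=\frac12$, the cross term vanishes and $\mathbb{E}[|\tilde{h}|^2]=\frac{KG^2(\theta,\theta_B)+1}{K+1}$, whence $\frac{1}{2\pi}\int_0^{2\pi}\mathbb{E}_{\tilde{h}}[X_\theta]\,\mathrm{d}\theta=c_0\big(\frac{KA_0}{2\pi(K+1)}+\frac{1}{K+1}\big)$ with $A_0=\int_0^{2\pi}G^2(\theta,\theta_B)\,\mathrm{d}\theta$. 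Substituting this into the chain above and then back into $\bar{p}\le 1-e^{-\lambda_e\mathbb{E}[A]}$ yields (\ref{eq:chp3_meanSSOP_up_2}); the two corner cases (\ref{eq:chp3_SSOP_De_up}) and (\ref{eq:chp3_SSOP_Ra_up}) then follow simply by letting $K\to\infty$ and $K\to 0$ in (\ref{eq:chp3_meanSSOP_up_2}).

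It remains to establish the closed form (\ref{eq:chp3_A_0L}) for $A_0$. Here I would expand $G^2(\theta,\theta_B)=|G(\theta,\theta_B)|^2=\frac1N\sum_{i,m=1}^N e^{jk\mathit{\Delta d}(\sin\theta_B-\sin\theta)(i-m)}$ from (\ref{eq:chp3_AF_ULA}), integrate term by term over $[0,2\pi]$, and use the Bessel identity $\int_0^{2\pi}e^{-jx\sin\theta}\,\mathrm{d}\theta=2\pi J_0(x)$: the $N$ diagonal terms ($i=m$) contribute $2\pi$, while the off-diagonal terms, grouped by $n=i-m$ with multiplicity $N-|n|$ and paired as $\pm n$ using that $J_0$ is even, collapse to $4\pi\sum_{n=1}^{N-1}\frac{N-n}{N}J_0(k\mathit{\Delta d}\,n)\cos(k\mathit{\Delta d}\,n\sin\theta_B)$. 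This last computation is routine once the Bessel integral is invoked; the only bookkeeping to watch is the $N-n$ multiplicity and the conjugate pairing into cosines.
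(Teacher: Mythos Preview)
Your proposal is correct and follows essentially the same route as the paper: one Jensen step for the exponential (\ref{eq:chp3_JI_1}), two Jensen steps for the concave power $t\mapsto t^{2/\beta}$ (\ref{eq:chp3_JI_2}) over the fading and over the uniform-in-$\theta$ average, followed by $\mathbb{E}[|\tilde{h}|^2]=\frac{KG^2(\theta,\theta_B)+1}{K+1}$ and the Bessel/diagonal-counting evaluation of $A_0$. The only cosmetic difference is that the paper applies (\ref{eq:chp3_JI_2}) first over $\theta$ pathwise and then over the fading, whereas you swap the order (Tonelli first, then fading, then $\theta$); both orderings yield the identical bound.
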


\begin{proof}
see Appendix\,\ref{appdx:bessel:owejg}.
\end{proof}

It is worth mentioning that $A_0$ in (\ref{eq:chp3_A_0}) is a general expression to be applied to any type of array (e.g., linear array, circular array).
For the ULA, we can find approximations for $A_0$ in (\ref{eq:chp3_A_0L}), because $J_0(x)$ has a decreasing envelope with the maximum value $J_0(0)=1$ at $x=0$, and approaches zero when $x$ increases.
This will facilitate the analytical analysis for $\bar{p}_{up}$, which in turn provides guidance for the analysis of $\bar{p}$, especially if $\bar{p}_{up}$ is close to $\bar{p}$.

\begin{remark}\label{prop:chp3_SSOP_up_analysis}
Notice that the inequalities in (\ref{eq:chp3_JI_1}) and (\ref{eq:chp3_JI_2}) are used to derive $\bar{p}_{up}$.
When $K=\infty$ and $\beta=2$, the equality holds for both (\ref{eq:chp3_JI_1}) and (\ref{eq:chp3_JI_2}); thus, $\bar{p}_{up}=\bar{p}$.
This can be verified by substituting $\beta=2$ into (\ref{eq:chp3_SSOP_De}) and (\ref{eq:chp3_SSOP_De_up}).
Similarly, when $K=\infty$, the equality holds only for (\ref{eq:chp3_JI_1}); thus, $\bar{p}_{up}$ is tighter when $\beta=2$ than that when $\beta>2$ according to (\ref{eq:chp3_JI_2}).
When $\beta=2$, the equality holds only for (\ref{eq:chp3_JI_2}); thus, $\bar{p}_{up}$ is tighter for $K=\infty$ than that for $K<\infty$ according to (\ref{eq:chp3_JI_1}).
For other cases, the tightness of $\bar{p}_{up}$ is not straightforward.
The numerical results of $\bar{p}_{up}$ for different $K$ and $\beta$ will be given in Section\,\ref{chp3:result:mnbv}.
\end{remark}

\begin{remark}
Both $\bar{p}$ in (\ref{eq:chp3_meanSSOP_Ri_2})-(\ref{eq:chp3_meanSSOP_Ra}) and $\bar{p}_{up}$ in (\ref{eq:chp3_meanSSOP_up_2})-(\ref{eq:chp3_SSOP_Ra_up}) are positively correlated with the transmit power $P_t$ via $c_0$. 
It is worth noticing that $P_t$ influences the SSOP being independent of the array parameters ($N$ and $\theta_B$).
Therefore, in this paper, when studying the impact of the array parameters, $P_t$ is treated as constant within the constant $c_0$.
\end{remark}

\section{Impact of ULA Parameters on Averaged SSOP}\label{sec:analysis}

In this section, we focus on the impact of ULA parameters (i.e., $N$ and $\theta_B$) on $A_0$ and thus the averaged SSOP $\bar{p}$.
To this end, we consider the asymptotic case when $K\to\infty$ and $N\to\infty$.
As stated in Remark\,\ref{prop:chp3_SSOP_up_analysis}, when $K\to\infty$ and $\beta=2$, we have $\bar{p}_{up}=\bar{p}$.
According to (\ref{eq:chp3_SSOP_De}) and (\ref{eq:chp3_SSOP_De_up}), it gives
\begin{align}\label{eq:chp3_p_De_beta_is_2}
	\bar{p}=\bar{p}_{up}=1-\text{exp}(-\frac{\lambda_ec_0}{2}A_0 ).
\end{align}
As seen in (\ref{eq:chp3_p_De_beta_is_2}), $\bar{p}_{up}$ (i.e., $\bar{p}$) monotonically increases with $A_0$.
Thus, it suffices to analyze the behavior of $A_0$.
Detailed numerical results for $\bar{p}$ and $\bar{p}_{up}$ for generalized values of $K$ and $\beta$ will be shown in Section\,\ref{chp3:result:wier}.


\subsection{Impact of $\theta_B$}

As stated in Remark\,\ref{prop:chp3_theta_B_range}, the range of $\theta_B\in[0,\frac{\pi}{2}]$ is concerned.
First, let $A_{0,n}$, for $n=1,...,N-1$, denote the summation term in (\ref{eq:chp3_A_0L}) and it is given by
\begin{align}\label{eq:chp3_A_0L_n}
	A_{0,n}=4\pi\frac{N-n}{N}J_0(k\mathit{\Delta d}n)\cos(k\mathit{\Delta d}n\sin\theta_B).
\end{align}
When $\mathit{\Delta d}=0.5\lambda$, (\ref{eq:chp3_A_0L_n}) can be written as
\begin{align}\label{eq:chp3_A_0L_n_2}
	A_{0,n}=4\pi\frac{N-n}{N}J_0(n\pi)\cos(n\pi\sin\theta_B).
\end{align}
Using (\ref{eq:chp3_A_0L}) and (\ref{eq:chp3_A_0L_n_2}), $A_0$ can be represented by
\begin{align}\label{eq:chp3_A_0L_2}
	A_0&=2\pi+\sum_{n=1}^{N-1}A_{0,n} \nonumber \\
	   &=2\pi+4\pi\sum_{n=1}^{N-1}\frac{N-n}{N}J_0(n\pi)\cos(n\pi\sin\theta_B)
\end{align}

When $N=8$ and $\mathit{\Delta d}=0.5\lambda$, the envelope of the components in (\ref{eq:chp3_A_0L_2}) is shown in Fig.\,\ref{fig:chp3_besselj_ULA}.
In the upper plot of Fig.\,\ref{fig:chp3_besselj_ULA}, $J_0(n\pi)$ is shown to decrease as $n$. 
The lower plot depicts the decreasing envelope of $A_{0,n}$, i.e., $\frac{N-n}{N}J_0(n\pi)$, with $n$.
When $n=1$, $A_{0,1}$ is the largest;
when $n=7$, $A_{0,7}$ is negligible.

\begin{figure}
\centering
\includegraphics[width=3.4in]{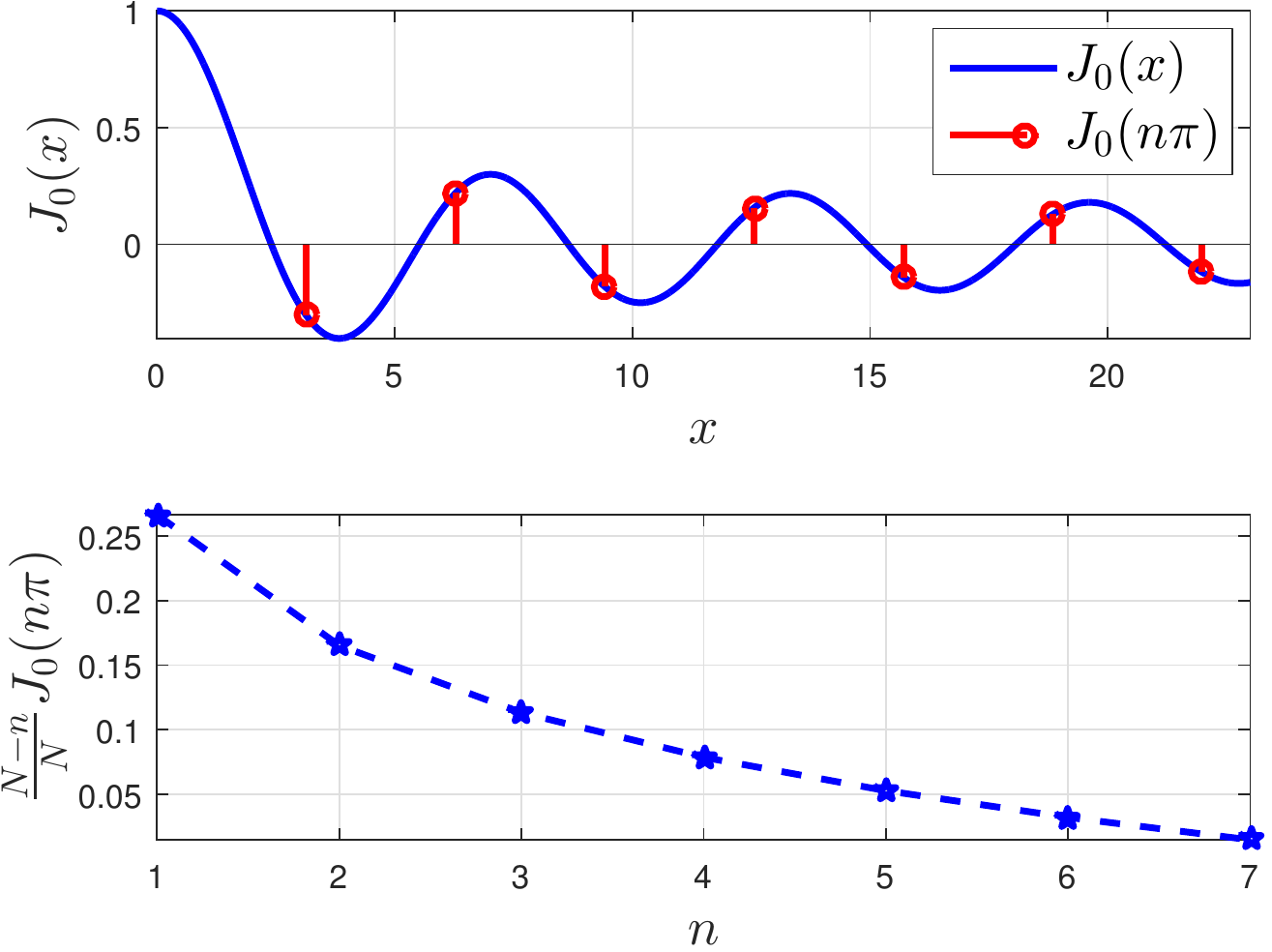}
\caption{Behavior of $J_0(n\pi)$ and $\frac{N-n}{N}J_0(n\pi)$ in $A_{0,n}$ for $n\leq N-1$ when $N=8$ antenna elements are used. As $n$ increases, both terms become less significant in $A_0$ (i.e., $\bar{p}_{up}$).}
\label{fig:chp3_besselj_ULA}
\end{figure}

As a result, for given $N$, we can approximate $A_0$ by considering the first few dominant terms.
Especially in the case when $\mathit{\Delta d}=0.5\lambda$, $A_{0,1}$ is dominant and it suffices to approximate $A_0$ using only $A_{0,1}$, i.e., 
\begin{align}\label{eq:chp3_A_0L_approx}
	A_0&\approx 2\pi+4\pi\frac{N-1}{N}J_0(\pi)\cos(\pi\sin\theta_B)\nonumber \\
	&=2\pi+\mathcal{O}(J_0(\pi)\cos(\pi\sin\theta_B)).
\end{align}
Using (\ref{eq:chp3_p_De_beta_is_2}) and (\ref{eq:chp3_A_0L_approx}), when $K\to\infty$, $\bar{p}_{up}$ can be asymptotically approximated by
\begin{align}\label{eq:chp3_p_De_approx_betais2}
	\lim_{K\to\infty}  \bar{p}_{up}=1-\text{exp}\{-\lambda_e c_0 \pi -\mathcal{O}(J_0(\pi)\cos(\pi\sin\theta_B))\},
\end{align}
where $\mathcal{O}(\cdot)$ denotes the big O notation.

From (\ref{eq:chp3_A_0L_approx}) and (\ref{eq:chp3_p_De_approx_betais2}), it can be seen that for any given $N$, $\bar{p}_{up}$ increases along with $\theta_B$ in the range $\theta_B\in[0,\pi/2]$, because $\cos(\pi\sin\theta_B)$ decreases from $1$ to $-1$ when $\theta_B$ increases from $0$ to $\frac{\pi}{2}$ and $J_0(\pi)<0$ as illustrated by the upper plot in Fig.\,\ref{fig:chp3_besselj_ULA}.

Fig.\,\ref{fig:chp3_p_DoE_De_ULA} depicts the impact of $\theta_B$ on $A_{0,n}$ and $\bar{p}_{up}$. 
In particular, $A_{0,n}$ is the components of $A_0$, which $\bar{p}_{up}$ relies on according to (\ref{eq:chp3_p_De_beta_is_2}).
For the illustrations, we use the ULA with $N=8$ and $\mathit{\Delta d}=0.5\lambda$.
In the left plot, $A_{0,1}$ among $A_{0,n}$, for $n=\{1,2,3\}$, has the largest variation from $\theta_B=0^{\circ}$ to $\theta_B=90^{\circ}$.
The variation of $A_{0,n}$ in $\theta_B\in[0,\pi/2]$ becomes smaller at larger $n$.

\begin{figure}
\centering
\includegraphics[width=3.4in]{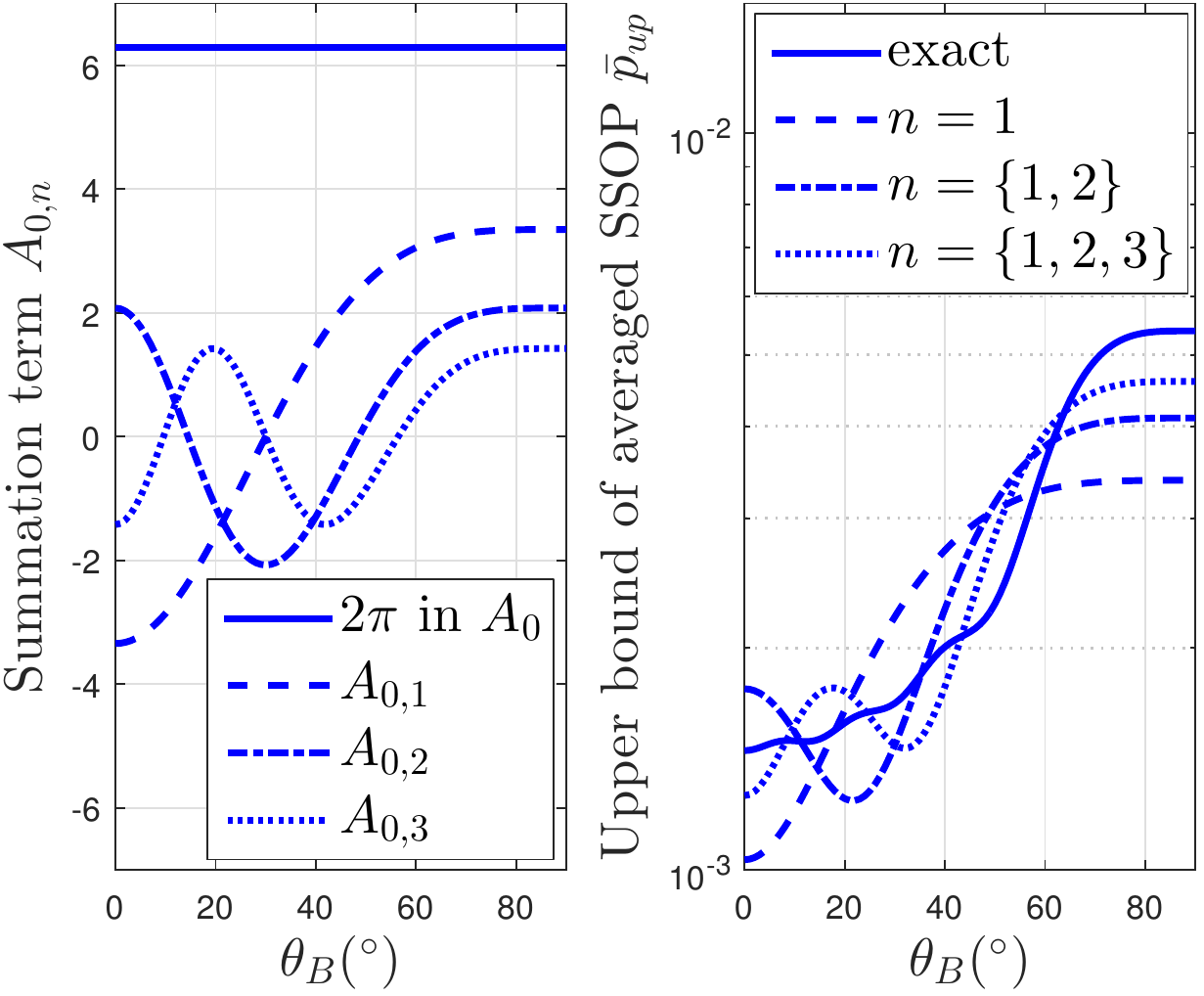}
\caption{Impact of $\theta_B$ on $A_{0,n}$ and $\bar{p}_{up}$. Left plot: $A_{0,n}$ versus Bob's angle $\theta_B$. Right plot: exact value and approximations of $\bar{p}_{up}$ versus $\theta_B$when $N=8$, $\mathit{\Delta d}=0.5\lambda$. $P_t/\sigma_n^2=15$\,dB, $R_B=3.4594$\,bps/Hz, $R_s=1$\,bps/Hz, $\lambda_e=1\times10^{-4}$}
\label{fig:chp3_p_DoE_De_ULA}
\end{figure}

In the right plot of Fig.\,\ref{fig:chp3_p_DoE_De_ULA}, the exact $\bar{p}_{up}$ is shown in comparison to its various approximations:
when $n=1$, the approximated $\bar{p}_{up}$ in (\ref{eq:chp3_p_De_approx_betais2}) is used, which relies on $A_{0,1}$;
when $n=\{1,2\}$, the approximated $\bar{p}_{up}$ in (\ref{eq:chp3_p_De_beta_is_2}) relies on $A_{0,n}$ in (\ref{eq:chp3_A_0L_2}), and so forth.
It can be seen in Fig.\,\ref{fig:chp3_p_DoE_De_ULA} that when $n=1$, the approximation already captures the increasing trend of the exact value.
With more values of $A_{0,n}$, the approximation becomes closer to the exact value.

It is worth noticing from Fig.\,\ref{fig:chp3_p_DoE_De_ULA} that $A_{0,n}$, for $n>2$, is not monotonic in the range $\theta_B\in[0,\frac{\pi}{2}]$.
However, for $n>2$, $A_{0,n}$, is less dominant than $A_{0,1}$. 
Overall, the exact value of $\bar{p}_{up}$ is depicted to have a monotonic increasing relationship with $\theta_B$ in general.

\subsection{Impact of $N$}

When $N$ changes, the number of summation terms in (\ref{eq:chp3_A_0L_2}) as well as its own term envelope $|A_{0,n}|$, are also influenced.
Therefore, we analyze $A_0$ with respect to $N$ for a given $\theta_B$ by obtaining another approximation of $A_0$.
Let $A_0$ in (\ref{eq:chp3_A_0L_2}) have $A_{0,n}=4\pi\frac{N-n}{N}q_n$, where
$\{q_n\}$ is an series for given $\theta_B$ and $n\in\mathbb{N}^+$, i.e., 
\begin{align}
	q_n = J_0(k\mathit{\Delta d}n)\cos(k\mathit{\Delta d}n\sin\theta_B).
\end{align}

Examples of $\{q_n\}$ when $\mathit{\Delta d}=0.5\lambda$ are illustrated in Fig.\,\ref{fig:chp3_besselj_ULA_2}.
For the three different values of $\theta_B\in\{0^{\circ},30^{\circ},60^{\circ}\}$, it can be seen in Fig.\,\ref{fig:chp3_besselj_ULA_2} that the behavior of $\{q_n\}$ differs greatly.
When $\theta_B=0^{\circ}$, $q_n=J_0(n\pi)$ are discrete samples of $J_0(x)$.
When $\theta_B=30^{\circ}$, $q_n=J_0(n\pi)\cos(\frac{n\pi}{2})$ is zero for odd $n$; and $(-1)^{n/2}J_0(x)$ for even $n$.
When $\theta_B=60^{\circ}$, $q_n=J_0(n\pi)\cos(\frac{n\sqrt{3}\pi}{2})$.

\begin{figure}
\centering
\includegraphics[width=3.4in]{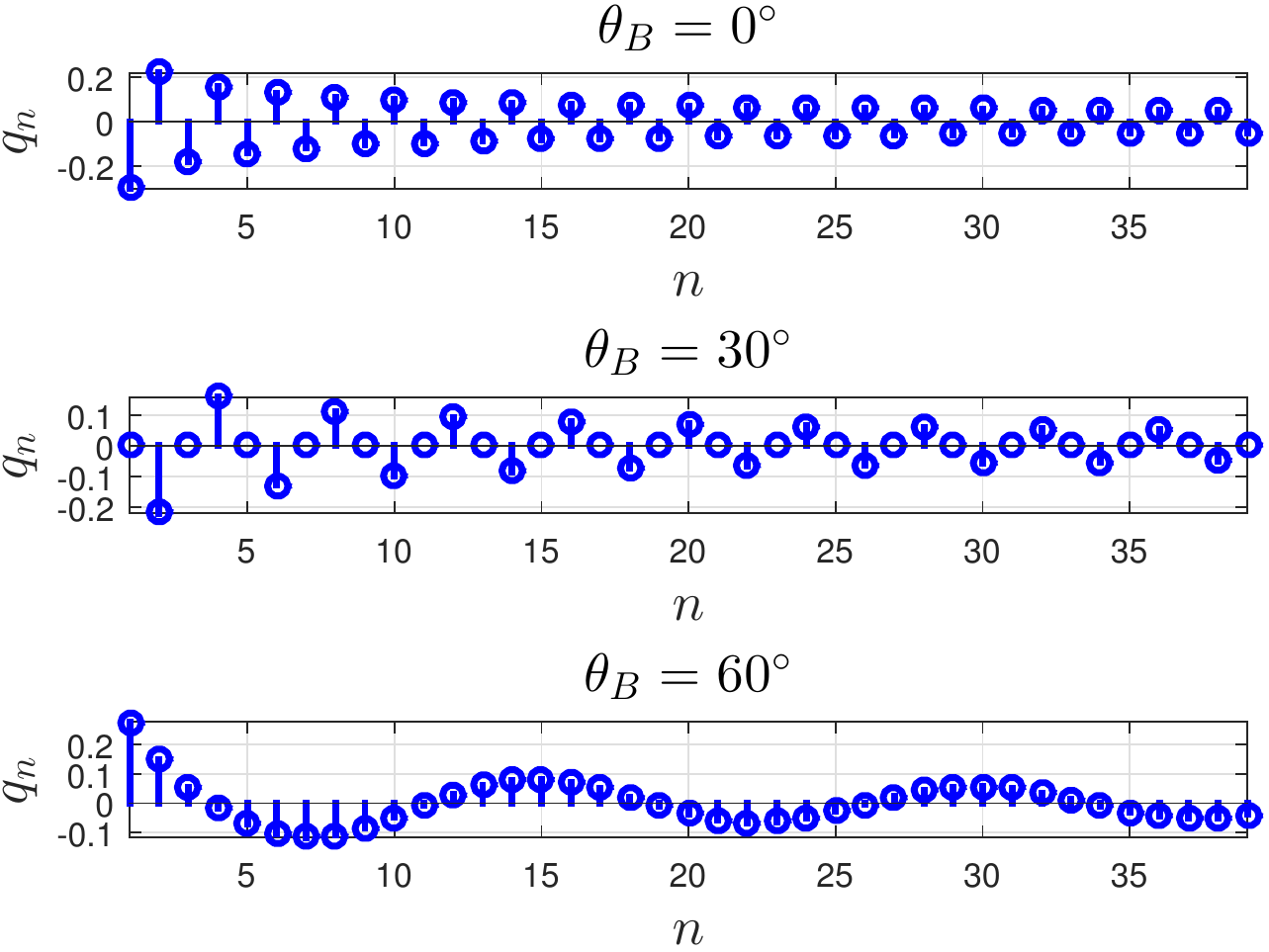}
\caption{Behavior of $q_n$ versus for $\theta_B\in\{0^{\circ},30^{\circ},60^{\circ}\}$.}
\label{fig:chp3_besselj_ULA_2}
\end{figure}

When $N$ is sufficiently large, $\frac{N-n}{N}$ becomes negligible for larger $n$; $q_n$ also approaches zero as $n$ increases, as illustrated in Fig.\,\ref{fig:chp3_besselj_ULA_2}.
In this case, only $A_{0,n}$, for $n\leq N_{up}\leq N-1$ needs to be considered.
Thus, the asymptotic expression when $N\to\infty$ can be expressed by
\begin{align}\label{eq:asymp_N}
	\lim_{N\to\infty}A_0&\approx 2\pi+4\pi\sum_{n=1}^{N_{up}}\frac{N-n}{N}q_n \nonumber \\
	   &= 2\pi+4\pi\sum_{n=1}^{N_{up}}q_n. 
\end{align}
The particular value of $N_{up}$, larger than which $q_n$ is negligible, is subject to practical requirement.
According to (\ref{eq:asymp_N}), we can asymptotically have
\begin{align}
	\lim_{N\to\infty} \bar{p}_{up} \approx 1-\text{exp}\{-\frac{\lambda_ec_0}{2}(2\pi+4\pi\sum_{n=1}^{N_{up}}q_n)\},
\end{align}
where $|q_n|\ll 1$ for $n>N_{up}$.

Fig.\,\ref{fig:chp3_p_N_De_ULA} depicts the impact of $N$ on $\bar{p}_{up}$ for various $\theta_B\in\{0^{\circ},30^{\circ},60^{\circ}\}$.
It can be seen from this figure that  when $N$ increases, $\bar{p}_{up}$ fluctuates at different rate for different $\theta_B$.
In addition, it can be observed that for any $\theta_B$, $\bar{p}_{up}$ approaches to a fixed value when $N$ grows sufficiently large.
This validates the asymptotic expression in (\ref{eq:asymp_N}).

\begin{figure}
\centering
\includegraphics[width=3.4in]{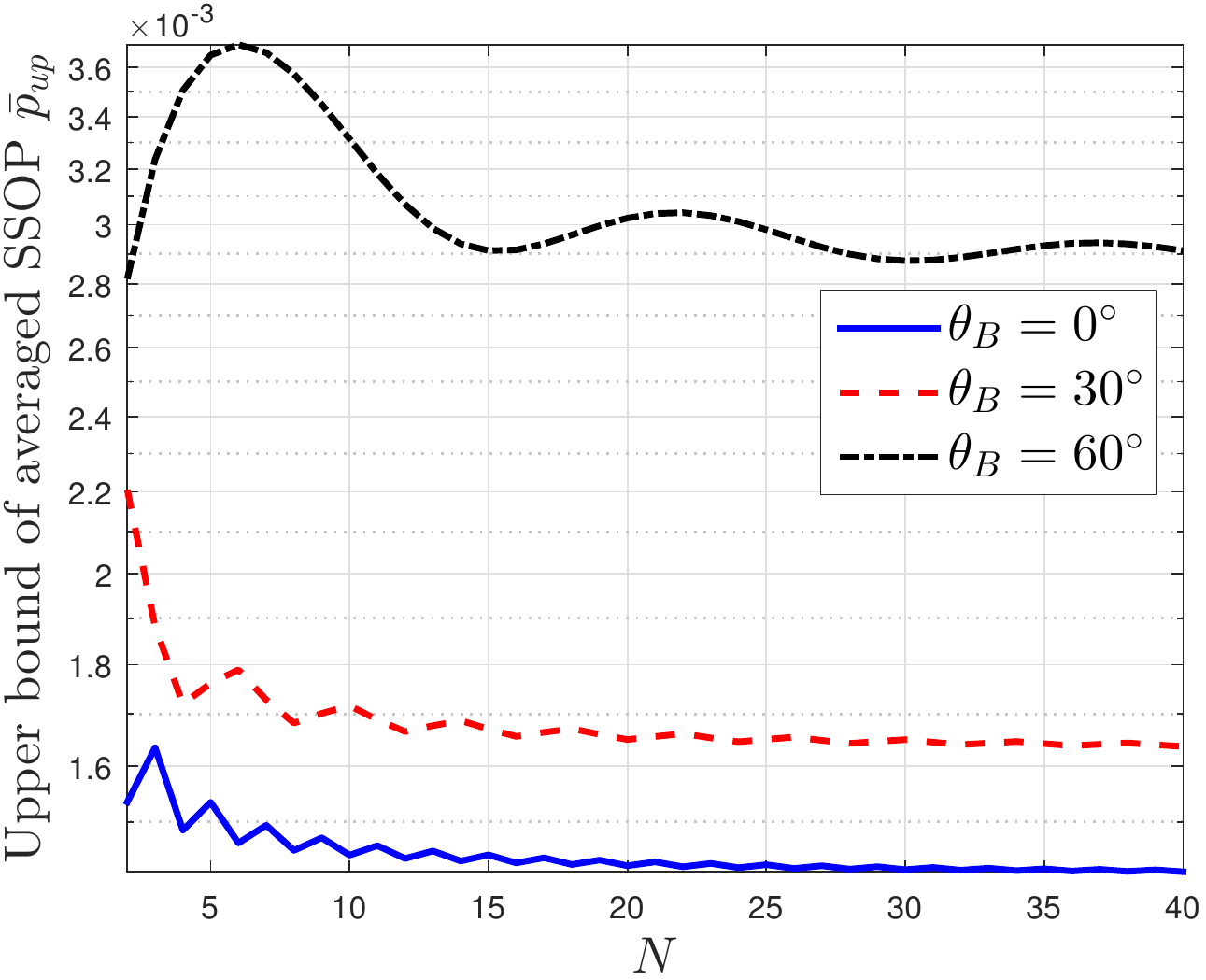}
\caption{$\bar{p}_{up}$ versus $N$ for Bob's angle $\theta_B\in\{0^{\circ},30^{\circ},60^{\circ}\}$. $\mathit{\Delta d}=0.5\lambda$, $P_t/\sigma_n^2=15$\,dB, $R_B=3.4594$\,bps/Hz, $R_s=1$\,bps/Hz, $\lambda_e=1\times10^{-4}$}
\label{fig:chp3_p_N_De_ULA}
\end{figure}

\section{Simulations and Numerical Results}\label{sec:analysis2}

In this section, we provide simulations and numerical results for $\bar{p}$ and $\bar{p}_{up}$ of the ER based beamforming over the Rician channel with any $K\ge 0$ and $\beta=\{2,3,4,5,6\}$ with respect to $N$ and $\theta_B$.

\subsection{SSOP and Its Upper Bound}
\label{chp3:result:wier}

In (\ref{eq:chp3_meanSSOP_up_2}), $\bar{p}_{up}$ is positively correlated with $\Big[\frac{c_0K}{2\pi(K+1)}A_0+\frac{c_0}{K+1}\Big]^{\frac{2}{\beta}}$.
For any fixed $\beta$ and $K$, $\bar{p}_{up}$ also has a positive relationship with $A_0$. 
Thus, the conclusions that are reached about $A_0$ regarding to the impact of $N$ and $\theta_B$ also apply to $\bar{p}_{up}$ for different $\beta$ and $K$.

For convenience, let $A_1$ denote $\frac{c_0K}{2\pi(K+1)}A_0+\frac{c_0}{K+1}$.
When $\beta$ increases from $2$ to $6$, $A_1^{\frac{2}{\beta}}$ decreases, because $A_1$ is generally larger than $1$.
It is also noticed that when $A_0=2\pi$, $K$ factor disappears, i.e., $A_1=c_0$.
When $A_0<2\pi$, the larger $K$ is, the smaller $A_1$ (i.e., $\bar{p}_{up}$) is; 
when $A_0>2\pi$, the larger $K$ is, the larger $A_1$ (i.e., $\bar{p}_{up}$) is.

In Fig.\,\ref{fig:chp3_p_up_K_beta}, the examples of $\bar{p}_{up}$ for different $K$ and $\beta$ are given for three typical values of $A_0$, i.e., $4.1326$, $2\pi$ and $15.3761$, which corresponds to $\theta_B=0^{\circ}$, $48.35^{\circ}$ and $90^{\circ}$ when $N=8$.
The logarithm scale is used to clearly show the ranges of $\bar{p}_{up}$ and $K$.
It can be seen that, when $\beta$ increases, $\bar{p}_{up}$ drops.
 For fixed $\beta$, $\bar{p}_{up}$ increases, stays unchanged or decreases depending on the value of $A_0$.

\begin{figure}
\centering
\includegraphics[width=3.4in]{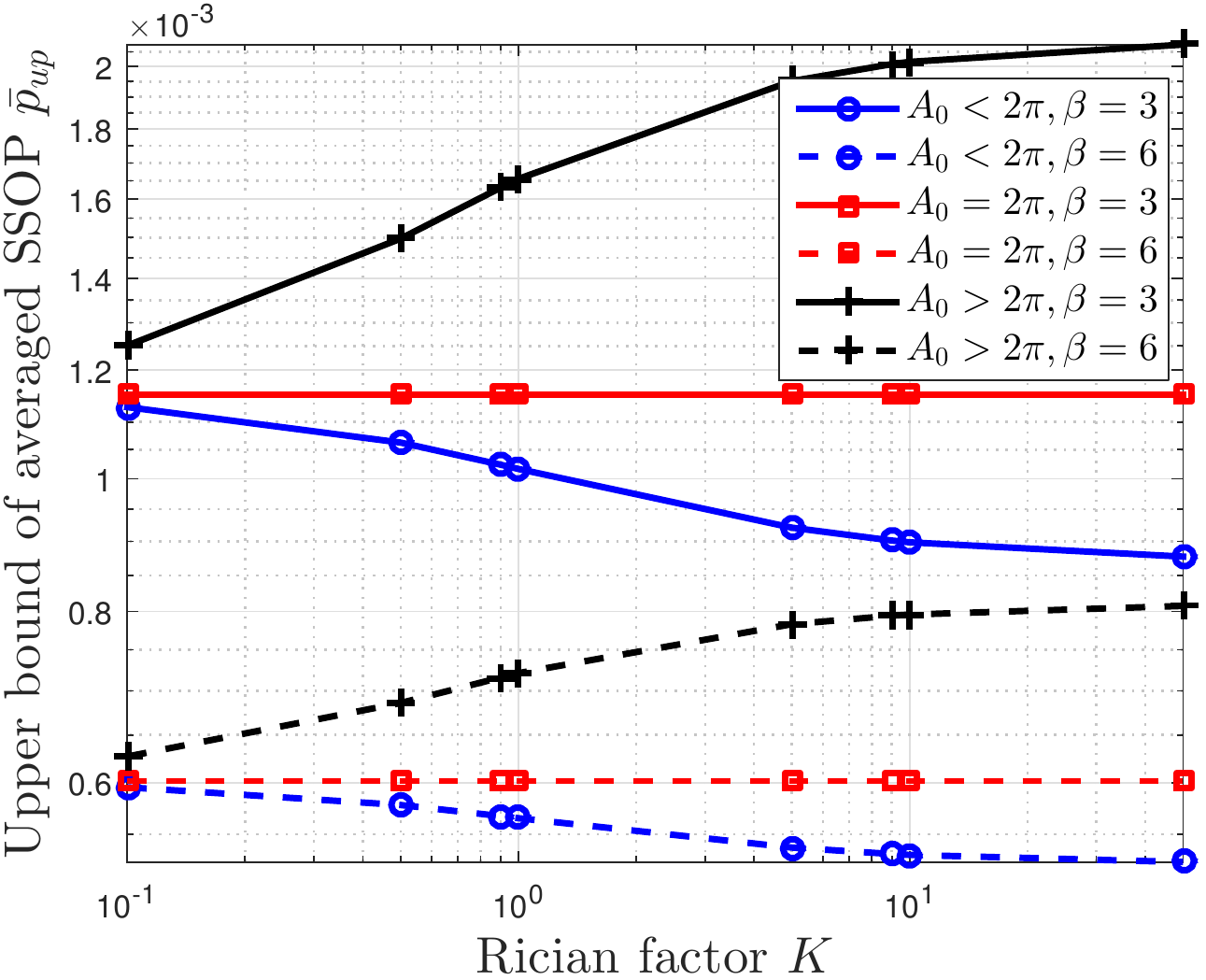}
\caption{$\bar{p}_{up}$ for different values of $A_0$, $K$ and $\beta$. $P_t/\sigma_n^2=40$\,dB, $R_B=3.4594$\,bps/Hz, $R_s=1$\,bps/Hz, $\lambda_e=1\times10^{-4}$}
\label{fig:chp3_p_up_K_beta}
\end{figure}

The range of $K$ in linear scale is from $0.01$ to $50$.
When $K=0.01$, the Rician channel approaches the Rayleigh channel ($K=0$).
When $K=50$, the Rician channel approaches the deterministic channel ($K=\infty$).
It can be seen that for fixed $\beta$, $\bar{p}_{up}$ is a constant for $K=0$ and is irrelevant to $A_0$ (nor $N$, $\theta_B$), as shown in (\ref{eq:chp3_meanSSOP_Ra}) and (\ref{eq:chp3_SSOP_Ra_up}).
When $K>10$, $\bar{p}_{up}$ approaches to a certain value that depends on $A_0$ which in turn depends on $N$ and $\theta_B$.

The above analysis of the properties of $\bar{p}_{up}$ serves as a coarse guidance for that of $p$. 
In the following, precise numerical results are used to show the properties of $\bar{p}$, which cannot be easily analyzed according to (\ref{eq:chp3_meanSSOP_Ri_2}).
First, the simulation results are provided to validate the expressions of $\bar{p}$ in (\ref{eq:chp3_meanSSOP_Ri_2}) to (\ref{eq:chp3_meanSSOP_Ra}) which are derived from the expression in (\ref{eq:chp3_meanSSOP_Ri_0}) which contains Gaussian random variables via $|\tilde{h}^2|$ according to (\ref{eq:chp3_A1}) and (\ref{eq:chp3_h_tilde_square}).

We choose $K=10$ and $\beta=3$ as an example to compare the numerical results based on the expression in (\ref{eq:chp3_meanSSOP_Ri_2}) and the simulation results based on the expression in (\ref{eq:chp3_meanSSOP_Ri_0}).
For the simulations, $1\times10^4$ samples are generated for $g_{Re}$ and $g_{Im}$ in (\ref{eq:chp3_h_tilde_square}).
The simulation and numerical results plotted in Fig.\,\ref{fig:chp3_p_DoE_numerical_and_simulation} show a good match between them, which verifies the validity of the expressions in (\ref{eq:chp3_meanSSOP_Ri_2}) to (\ref{eq:chp3_meanSSOP_Ra}).

\begin{figure}
\centering
\includegraphics[width=3.4in]{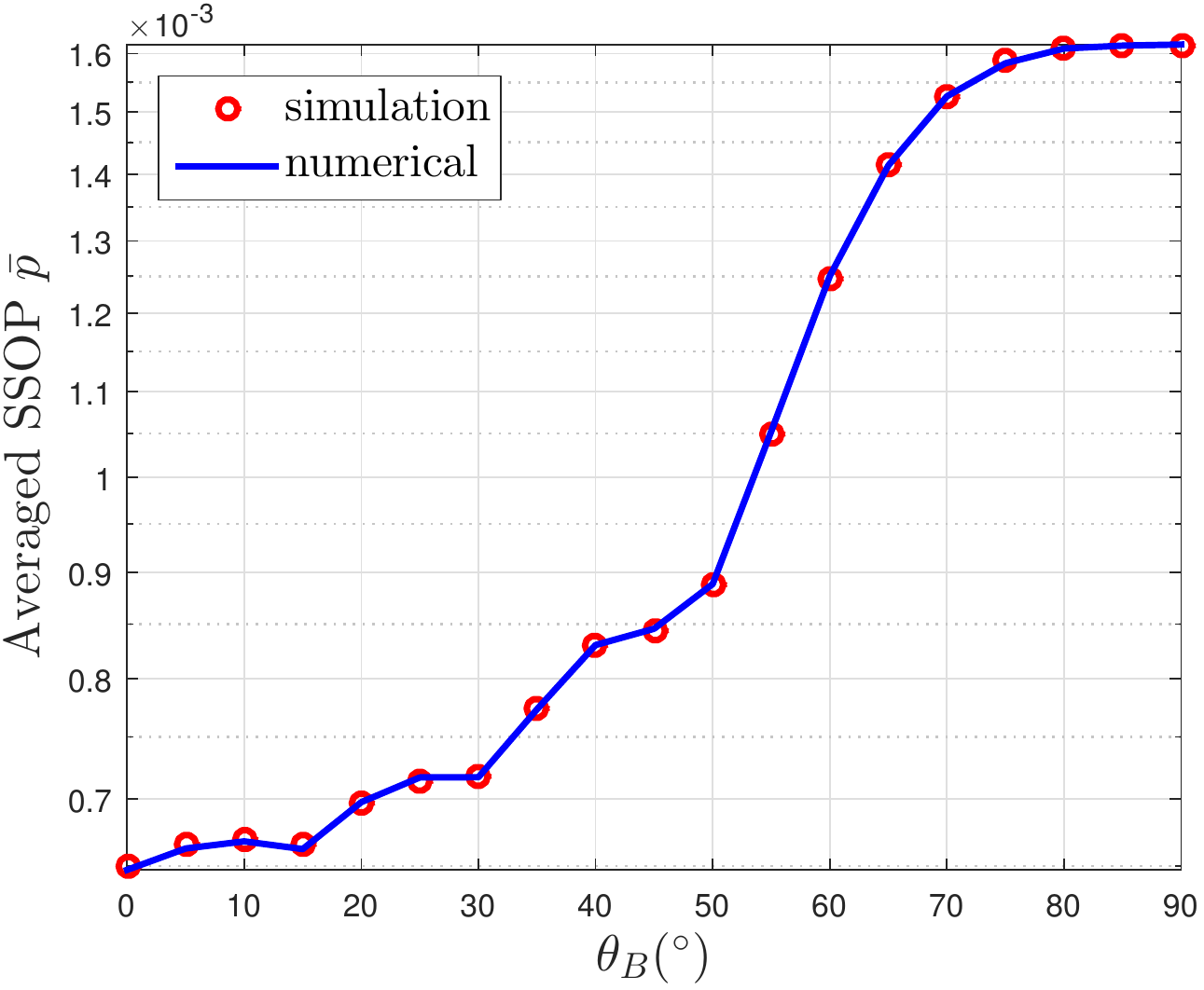}
\caption{Simulation and numerical results for $\bar{p}$ versus $\theta_B$; $K=10$, $\beta=3$, $P_t/\sigma_n^2=15$\,dB, $R_B=3.4594$\,bps/Hz, $R_s=1$\,bps/Hz, $\lambda_e=1\times10^{-4}$}
\label{fig:chp3_p_DoE_numerical_and_simulation}
\end{figure}

An example of $\bar{p}$ versus $\theta_B$ for $\beta=3$ and $N=8$ is given in Fig.\,\ref{fig:chp3_p_and_bounds_DoE_beta_3_L}.
$\beta=3$ is a typical value for some indoor scenarios such as home and factory\,\cite{goldsmith2005wireless}.
Typical values of $K$ are chosen as 0, 1, 10 and $\infty$.
In addition, $\bar{p}_{up}$ is also shown.

\begin{figure}
\centering
\includegraphics[width=3.4in]{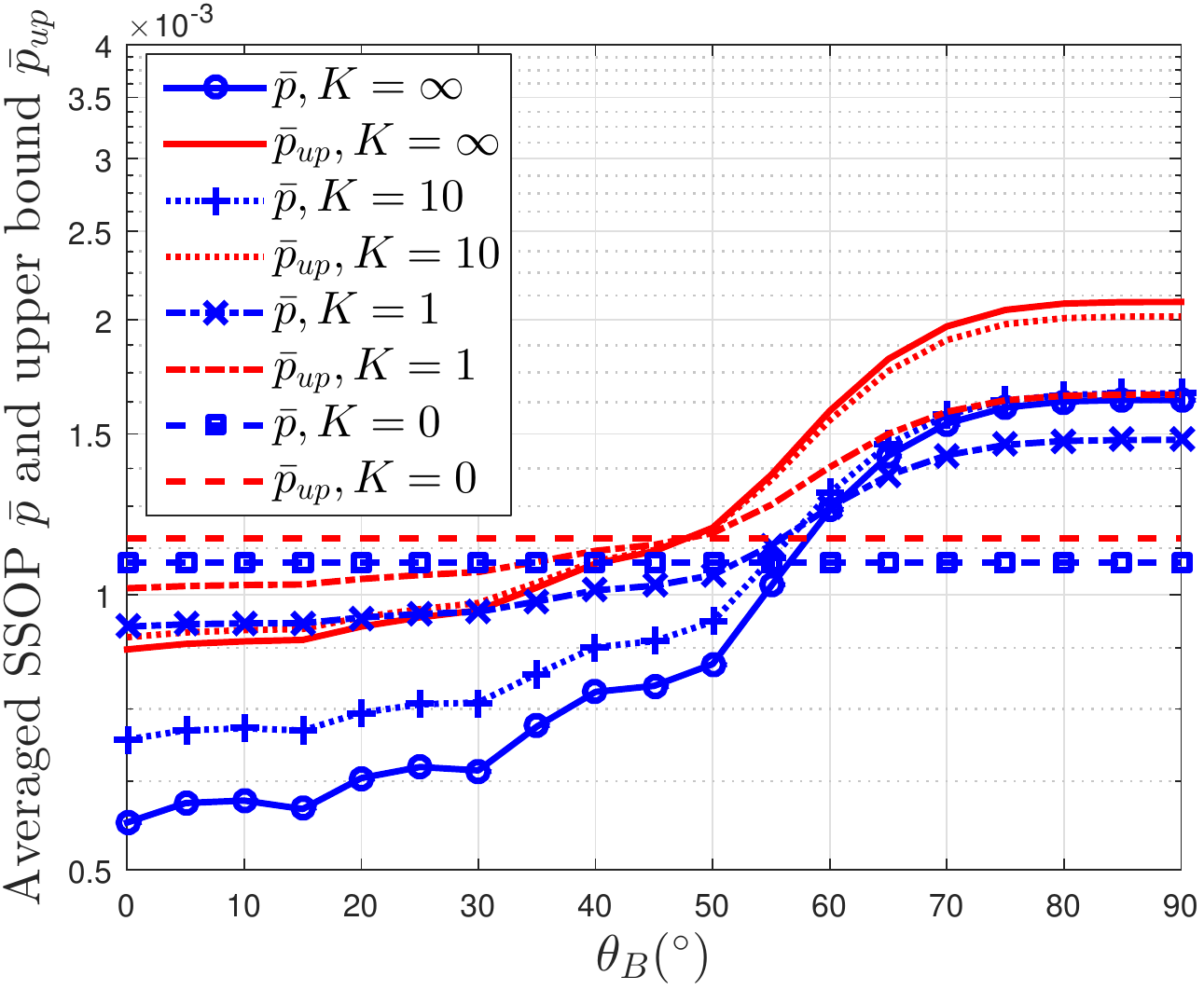}
\caption{$\bar{p}$ and $\bar{p}_{up}$ versus Bob's angle $\theta_B$ for different $K$ when $\beta=3$, $N=8$, $\mathit{\Delta d}=0.5\lambda$. $P_t/\sigma_n^2=15$\,dB, $R_B=3.4594$\,bps/Hz, $R_s=1$\,bps/Hz, $\lambda_e=1\times10^{-4}$}
\label{fig:chp3_p_and_bounds_DoE_beta_3_L}
\end{figure}

It can be seen that $\bar{p}$ and $\bar{p}_{up}$ increase in the range $\theta_B\in[0,\frac{\pi}{2}]$, except for $K=0$.
When $K=0$, the curves are flat because $\bar{p}$ and $\bar{p}_{up}$ are irrelevant to $\theta_B$, according to (\ref{eq:chp3_meanSSOP_Ra}) and (\ref{eq:chp3_SSOP_Ra_up}).
By comparing $\bar{p}_{up}$ and $\bar{p}$, it can be observed that the upper bound reflects the trend very well.
It can also be seen that for both $\bar{p}$ and $\bar{p}_{up}$, the curve for $K=10$ is closer to that for $K=\infty$, while the curve for $K=1$ is closer to that for $K=0$.


For completeness, Fig.\,\ref{fig:chp3_p_and_bounds_N_beta_3_L} shows an example of $\bar{p}$ and $\bar{p}_{up}$ versus $N$ for $\beta=3$ and $\theta_B=0^{\circ}$.
It can be seen that $\bar{p}$ and $\bar{p}_{up}$ decrease to different floor levels depending on $K$.
The same behavior has been shown in Fig.\,\ref{fig:chp3_p_N_De_ULA} where $K=\infty$ and $\beta=2$.
However, it can also be seen that $\bar{p}$ converges with a much slower speed, leading to an increasing larger gap between $\bar{p}$ and $\bar{p}_{up}$ as $N$ increases.

\begin{figure}
\centering
\includegraphics[width=3.4in]{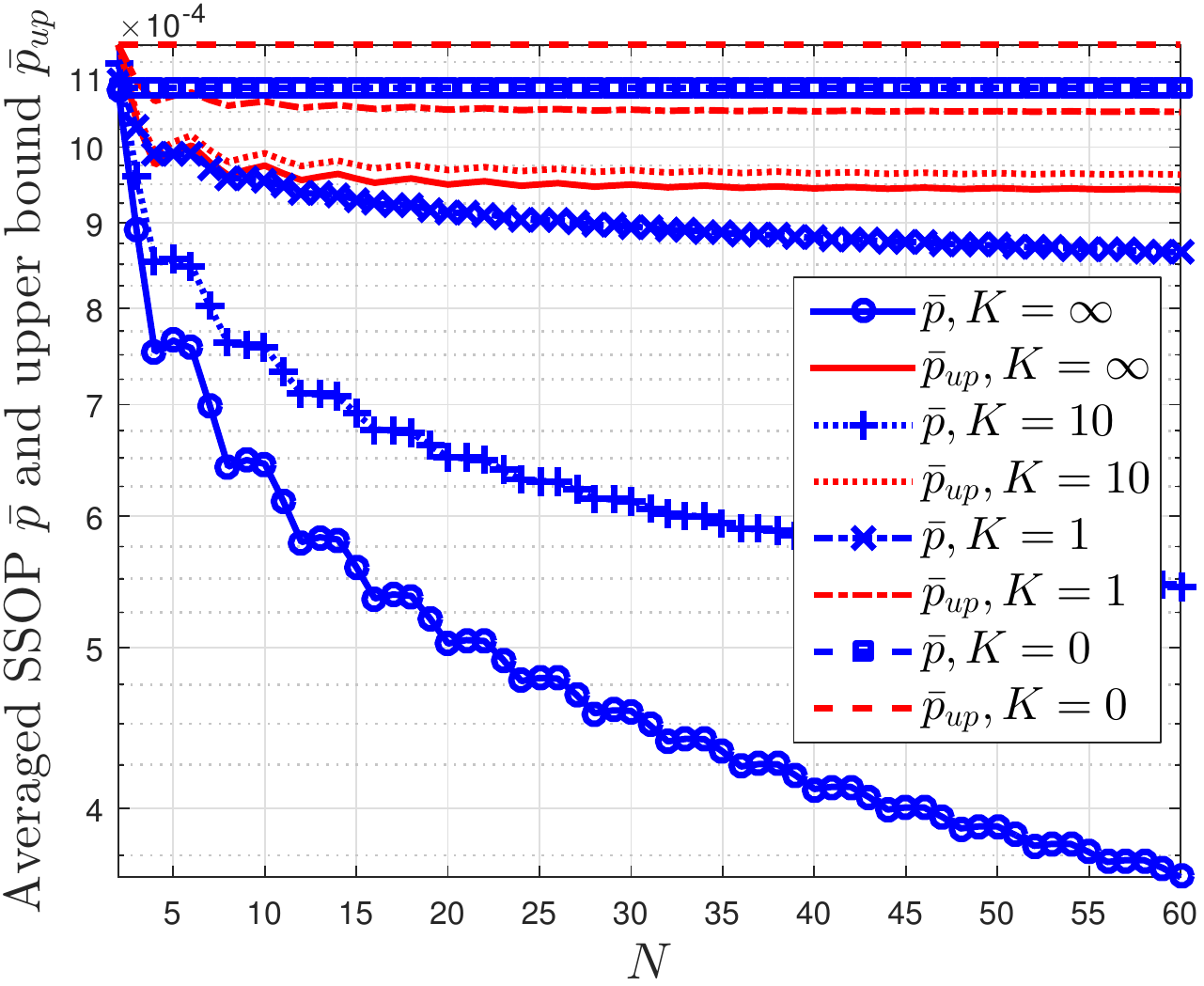}
\caption{$\bar{p}$ and $\bar{p}_{up}$ versus number of elements $N$ for different $K$ when $\beta=3$, $\theta_B=0^{\circ}$, $\mathit{\Delta d}=0.5\lambda$. $P_t/\sigma_n^2=15$\,dB, $R_B=3.4594$\,bps/Hz, $R_s=1$\,bps/Hz, $\lambda_e=1\times10^{-4}$}
\label{fig:chp3_p_and_bounds_N_beta_3_L}
\end{figure}

In summary, the properties of $A_0$ with respect to $N$ and $\theta_B$ can be extended to $\bar{p}_{up}$.
As for $\bar{p}$, while $\bar{p}$ has similar properties to $A_0$ with respect to $N$ and $\theta_B$, the gaps between $\bar{p}$ and $\bar{p}_{up}$ increase as $N$.
Therefore, in the next section, the tightness of $\bar{p}_{up}$ will be examined.

\subsection{Tightness of Upper Bound}
\label{chp3:result:mnbv}

In this section, the tightness of the upper bound is examined via numerical results with respect to $(K,\beta, N,\theta_B)$.
An example of $\bar{p}$ and $\bar{p}_{up}$ for different $K$ and $\beta$ with $N=8$ and $\theta_B=0^{\circ}$ is shown in Fig.\,\ref{fig:chp3_p_and_p_up_K_beta}.
At lower region of $K$, the channel approaches the Rayleigh channel.
Thus, $\bar{p}$ and $\bar{p}_{up}$ converge to the certain values that only depend on $\beta$ according to (\ref{eq:chp3_meanSSOP_Ra}) and (\ref{eq:chp3_SSOP_Ra_up}).
At higher region of $K$, the channel approaches the deterministic channel.
$\bar{p}$ and $\bar{p}_{up}$ converge to the certain values that depend on $\beta$ and $G(\theta,\theta_B)$,  according to (\ref{eq:chp3_SSOP_De}) and (\ref{eq:chp3_SSOP_De_up}).

It can also be seen that when $\beta=2$, the curves for $\bar{p}$ and $\bar{p}_{up}$ emerge as $K$ increases, which corresponds to $\bar{p}=\bar{p}_{up}$ for the deterministic channel.
For other values of $\beta$, as $K$ increases, the gaps between $\bar{p}$ and $\bar{p}_{up}$ increases.

\begin{figure}
\centering
\includegraphics[width=3.4in]{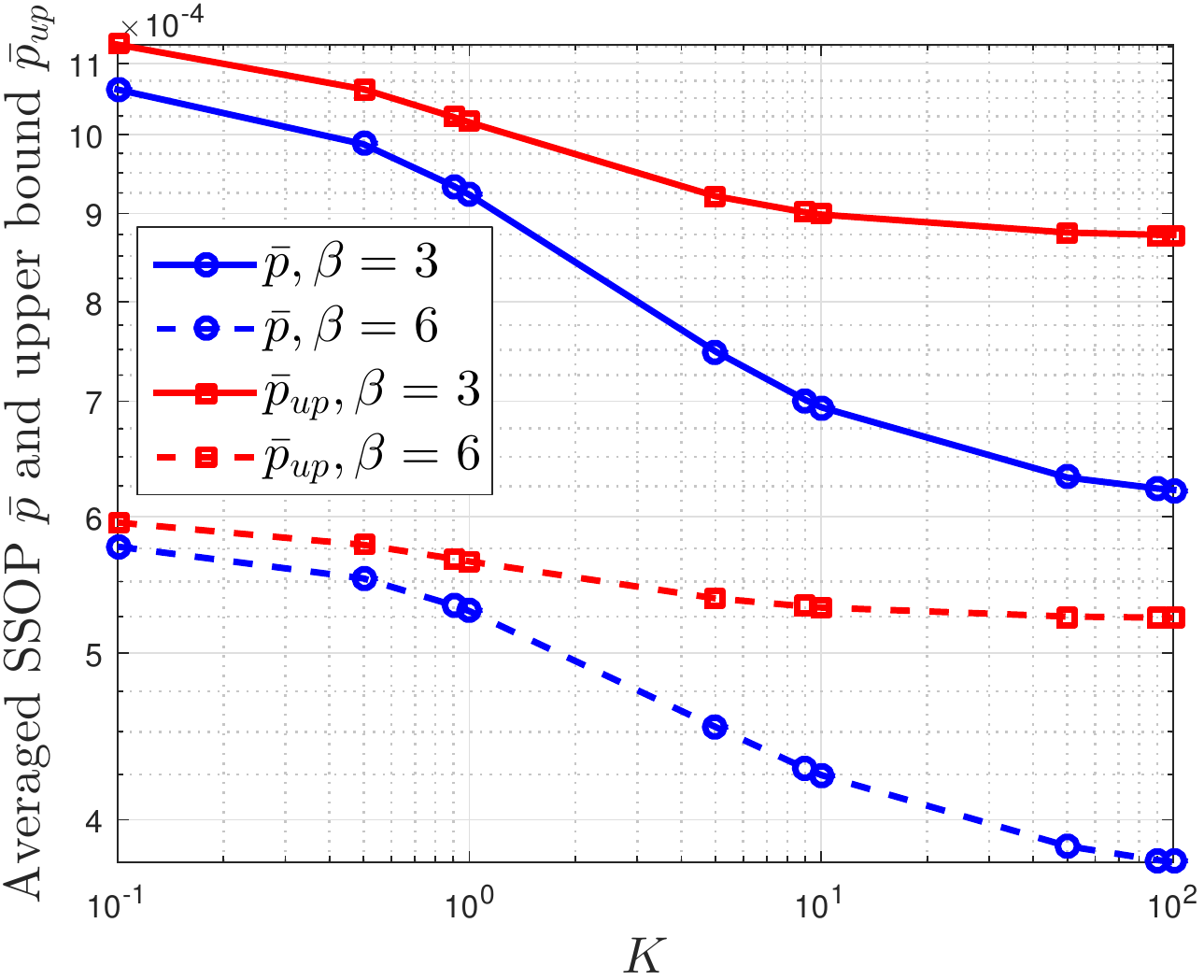}
\caption{$\bar{p}$ and $\bar{p}_{up}$ for different $K$ and $\beta$ when $N=8$, $\theta_B=0^{\circ}$, $\mathit{\Delta d}=0.5\lambda$. $P_t/\sigma_n^2=15$\,dB, $R_B=3.4594$\,bps/Hz, $R_s=1$\,bps/Hz, $\lambda_e=1\times10^{-4}$}
\label{fig:chp3_p_and_p_up_K_beta}
\end{figure}

In this section, the ratio between $\bar{p}_{up}$ and $\bar{p}$ is used to measure the tightness of $\bar{p}_{up}$. 
Let $\eta$ denote the ratio,
\begin{align}\label{eq:chp3_eta}
	\eta=\frac{\bar{p}_{up}}{\bar{p}}.
\end{align}
$\eta\geq 1$.
The smaller value of $\eta$, the tighter $\bar{p}_{up}$ is.
In Fig.\,\ref{fig:chp3_p_and_p_up_K_beta}, it can be deduced that $\eta$ will take the minimum value at $K=0$ and approach the maximum value at $K=\infty$.
Thus, in the following, the extreme cases $K=0$ and $K=\infty$ are used to study the range of $\eta$ for different $N$, $\theta_B$ and $\beta$.

In Fig.\,\ref{fig:chp3_eta_DoE_bounds_ULA}, $\eta$ is plotted against $\theta_B$ for $K=0$ and $K=\infty$ for all $\beta$.
The ULA has $N=8$ elements and $\mathit{\Delta d}=0.5\lambda$.
For Rayleigh channel, both $\bar{p}$ and $\bar{p}_{up}$ are irrelevant to $\theta_B$, thus $\eta$ is flat across $\theta_B\in[0,90^{\circ}]$. 
For the deterministic channel when $\beta=2$, $\eta=1$; 
when $\beta>2$, $\eta$ in general decrease with $\theta_B$.

\begin{figure}
\centering
\includegraphics[width=3.4in]{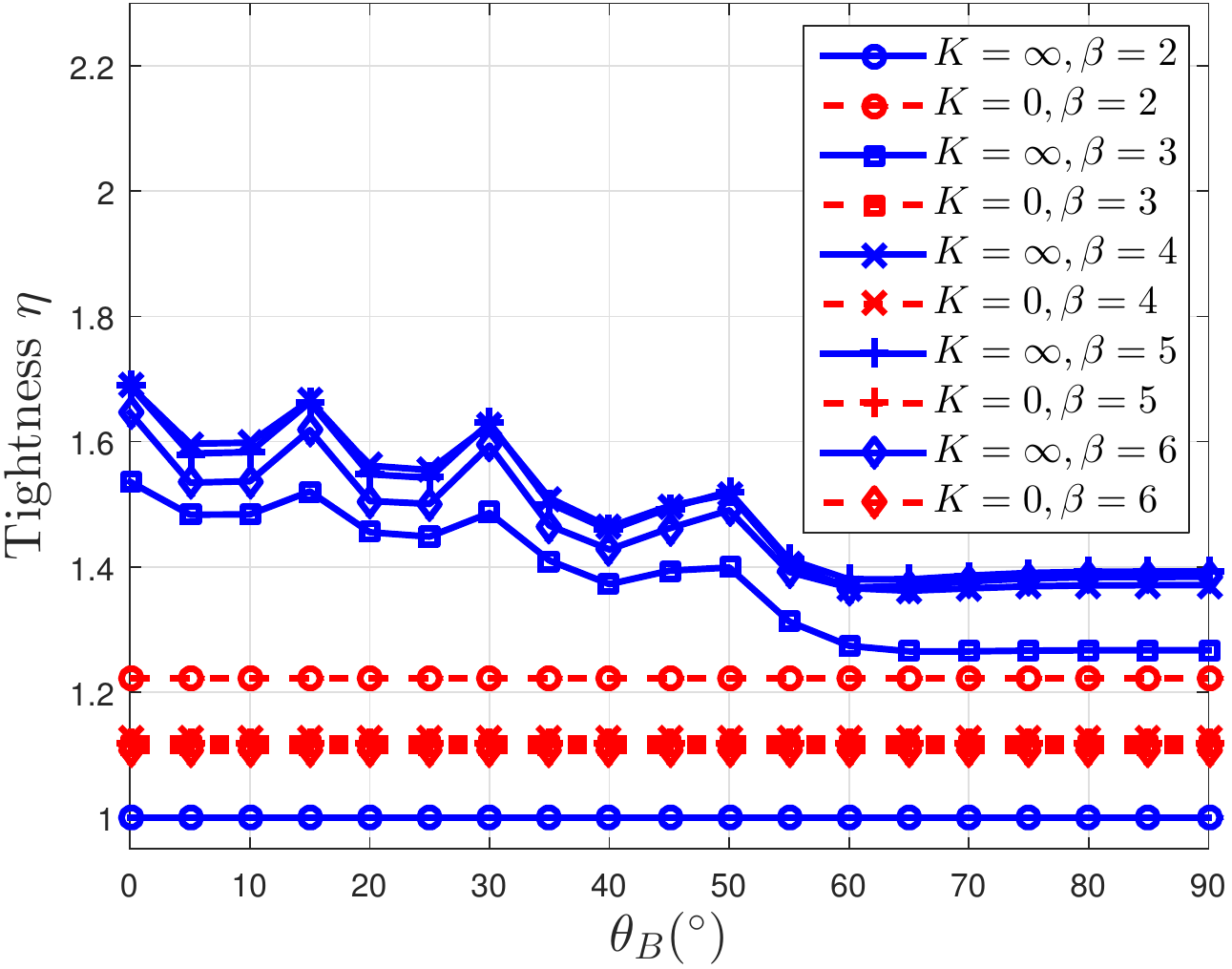}
\caption{$\eta$ versus Bob's angle $\theta_B$ for deterministic and Rayleigh channels for all $\beta$, number of elements is $N=8$}
\label{fig:chp3_eta_DoE_bounds_ULA}
\end{figure}

Comparing the curves for both the deterministic and the Rayleigh channels, it is noticed that when $\beta>2$, the ratios are located closely in a cluster.
However, there does not exist monotonic relationship between $\eta$ and $\beta$.
For example, when $\beta=6$, $\eta$ for the deterministic channel is smaller than that when $\beta=4$.

In Fig.\,\ref{fig:chp3_eta_N_bounds_ULA}, $\eta$ is plotted against $N$ for $K=\{0,\infty\}$ and $\beta\in\{2,3,4,5,6\}$.
The ULA has $\mathit{\Delta d}=0.5\lambda$ and $\theta_B=0^{\circ}$.
For the Rayleigh channel, $\eta$ is flat across $N$ for all $\beta$.
For the deterministic channel, $\eta$ in general increases with $N$ when $\beta>2$, which verifies the observation from Fig.\,\ref{fig:chp3_p_and_bounds_N_beta_3_L}.

\begin{figure}
\centering
\includegraphics[width=3.4in]{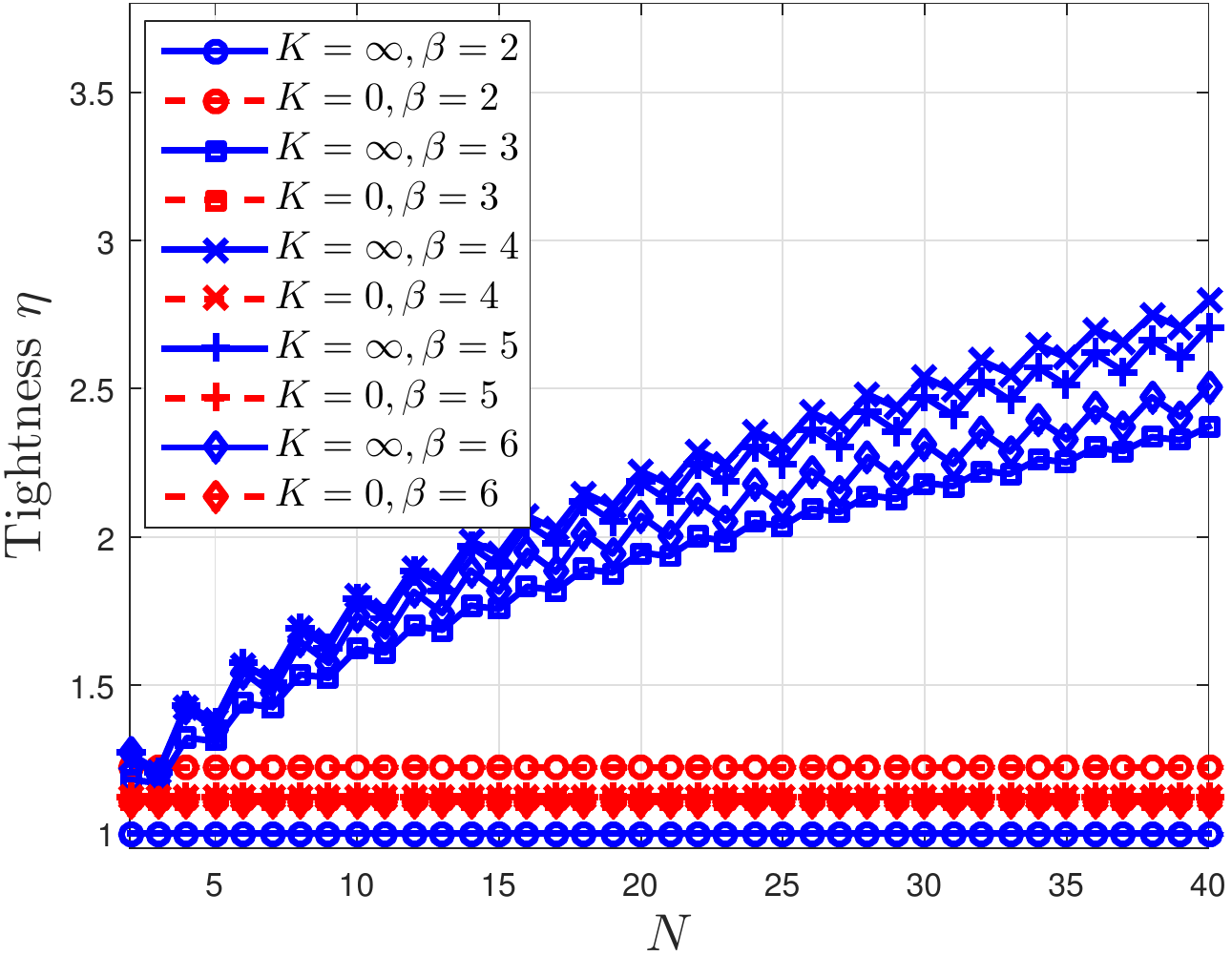}
\caption{$\eta$ versus $N$ for deterministic and Rayleigh channels for all $\beta$, $\theta_B=0^{\circ}$}
\label{fig:chp3_eta_N_bounds_ULA}
\end{figure}

In summary, when $\beta=2$, $\eta$ decreases with $K$ till the minimum value $\eta=1$;
when $\beta>2$, $\eta$ increases with $K$ till certain value that depends on $N$ and $\theta_B$, and the values of $\eta$ for different $\beta$ stay in a cluster.
For given $\beta$ and $K$, $\eta$ generally decreases with $\theta_B$ and increases with $N$.
In a lower region of $N$, e.g., $N<10$, the value of $\eta$ is smaller than 2.


\section{Conclusions}\label{sec:concl}

This paper has investigated secure wireless communications whereby a ULA in Alice communicates to Bob in the presence of PPP distributed Eves. Particularly, we mathematically defined ER to characterize spatial secrecy outage event and proposed the ER based beamforming over a Rician fading channel. As for the analysis of the ER, the analytic expression of the pattern area was also derived in form of Bessel function and two different approximations were adopted to analyze how the Bob’s angle and the number of element of the ULA quantify the ER. Using the ER, the SSOP was defined and the SSOP performance was evaluated, allowing the derivation of its exact and upper bound closed-form expressions. The impact of the array parameters on the SSOP was discussed to find that the SSOP increases dramatically with increasing Bob’s angle; decreases with reducing ER; and approaches certain level with increasing number of antenna elements. Simulations and the numerical results validated our analysis and examined the tightness of the upper bound expressions. Since the definitions of the ER and the SSOP were generalized to be applicable to any array type, the results can be useful to various antenna array types in future wireless security systems.

\section*{Acknowledgment}
The authors gratefully acknowledge support from the US-Ireland R\&D Partnership USI033 `‘WiPhyLoc8’' grant involving Rice University (USA), University College Dublin (Ireland) and Queen’s University Belfast (N. Ireland).

\ifCLASSOPTIONcaptionsoff
  \newpage
\fi
\bibliographystyle{IEEEtran}
\bibliography{IEEEabrv,bibliography}

\appendices

\section{Proof of Theorem\,\ref{th:sec4_p_up}}
\label{appdx:bessel:owejg}

According to (\ref{eq:chp3_meanSSOP_Ri_0}) and (\ref{eq:chp3_JI_1}), it can be derived that
\begin{align}\label{eq:chp3_meanSSOP_up_inequality}
	 \bar{p}=1-\mathbb{E}_{|\tilde{h}|}[e^{-\lambda_eA}]\leq 1-e^{-\lambda_e\mathbb{E}_{|\tilde{h}|}[A]}.
\end{align}
Notice that $A$ depends on random variable $\tilde{h}$ and is not constant, except for $K=\infty$.
Thus, the equality holds only for deterministic channels.

To solve (\ref{eq:chp3_meanSSOP_up_inequality}), assume that $\theta\sim \mathcal{U}(0,2\pi)$.
According to (\ref{eq:chp3_A3}), $A$ in (\ref{eq:chp3_meanSSOP_up_inequality}) can be converted into
\begin{align}\label{eq:chp3_meanSSOP_up_inequality2}
A=2\pi\frac{1}{2}\int_0^{2\pi}\frac{1}{2\pi}X_{\theta}^{\frac{2}{\beta}}\,\mathrm{d}\theta
 =\pi\mathbb{E}_{\theta}[X_{\theta}^{\frac{2}{\beta}}].
\end{align}
According to (\ref{eq:chp3_JI_2}), (\ref{eq:chp3_meanSSOP_up_inequality2}) is bounded by
\begin{align}\label{eq:chp3_JI_theta}
A\leq \pi(\mathbb{E}_{\theta}[X_{\theta}])^{\frac{2}{\beta}}
 =\pi \Big(\int_0^{2\pi}\frac{1}{2\pi}X_{\theta}\,\mathrm{d}\theta \Big)^{\frac{2}{\beta}}.
\end{align}
In the inequality, the equality holds when $\beta=2$ for any $K$.

According to (\ref{eq:chp3_meanSSOP_up_inequality}) and (\ref{eq:chp3_JI_theta}), it can be derived that
\begin{align}\label{eq:chp3_meanSSOP_up_inequality3}
	\mathbb{E}_{|\tilde{h}|}[A] 
	\leq \pi\mathbb{E}_{|\tilde{h}|}\Big[\Big(\int_0^{2\pi}\frac{1}{2\pi}X_{\theta}\,\mathrm{d}\theta\Big)^{\frac{2}{\beta}}\Big].
\end{align}
Then applying (\ref{eq:chp3_JI_2}) and (\ref{eq:chp3_meanSSOP_up_inequality3}), it can be derived that
\begin{align}\label{eq:chp3_inequality_jvje}
	\pi\mathbb{E}_{|\tilde{h}|}\Big[\Big(\int_0^{2\pi}\frac{1}{2\pi}X_{\theta}\,\mathrm{d}\theta\Big)^{\frac{2}{\beta}}\Big]
	\leq \pi\Big(\mathbb{E}_{|\tilde{h}|}\Big[\int_0^{2\pi}\frac{1}{2\pi}X_{\theta}\,\mathrm{d}\theta\Big]\Big)^{\frac{2}{\beta}}. 
\end{align}
Exchanging the integral and $\mathbb{E}_{|\tilde{h}|}$, then substituting $X_{\theta}=c_0|\tilde{h}|^2$, it can be derived that
\begin{align}\label{eq:chp3_meanA_up}
	\mathbb{E}_{|\tilde{h}|}[A] \leq \pi\Big(\frac{c_0}{2\pi}\int_0^{2\pi}\mathbb{E}_{|\tilde{h}|}[|\tilde{h}|^2]\,\mathrm{d}\theta\Big)^{\frac{2}{\beta}}.
\end{align}
Notice that when $\beta=2$, the equality holds.

Apply (\ref{eq:chp3_meanA_up}) to (\ref{eq:chp3_meanSSOP_up_inequality}) then obtain
\begin{align}
	\bar{p}&	\leq 1-e^{-\lambda_e\mathbb{E}_{|\tilde{h}|}[A]} \nonumber \\
	&\leq 1-\text{exp}\Big[-\lambda_e\pi\Big(\frac{c_0}{2\pi}\int_0^{2\pi}\mathbb{E}_{|\tilde{h}|}[|\tilde{h}|^2]\,\mathrm{d}\theta\Big)^{\frac{2}{\beta}}\Big].
\end{align}
The upper bound $\bar{p}_{up}$ can be expressed by
\begin{align}\label{eq:chp3_meanSSOP_up_1}
	\bar{p}_{up}=1-\text{exp}\Big[-\lambda_e\pi\Big(\frac{c_0}{2\pi}\int_0^{2\pi}\mathbb{E}_{|\tilde{h}|}[|\tilde{h}|^2]\,\mathrm{d}\theta\Big)^{\frac{2}{\beta}}\Big].
\end{align}
According to (\ref{eq:chp3_h_tilde_square}), $\mathbb{E}_{|\tilde{h}|}[|\tilde{h}|^2]=\frac{KG^2(\theta,\theta_B)+1}{K+1}$.
Substituting the previous result into (\ref{eq:chp3_meanSSOP_up_1}), (\ref{eq:chp3_meanSSOP_up_2}) can be obtained.

For special cases, take the limit of $K\to\infty$ and $K\to 0$, (\ref{eq:chp3_SSOP_De}) and (\ref{eq:chp3_meanSSOP_Ra}) can be obtained.


For the ULA, (\ref{eq:chp3_A_0}) can be further derived according to (\ref{eq:chp3_AF_ULA}).
\begin{align}
	A_0&=\int_0^{2\pi} \frac{1}{N}\sum_{i,j} e^{jk\Delta d(\sin\theta_B-\sin\theta)(i-j)} \,\mathrm{d}\theta \nonumber \\
		 &=\frac{1}{N}\sum_{i,j} e^{jk\Delta d\sin\theta_B(i-j)}
	  \int_0^{2\pi} e^{-jk\Delta d\sin\theta(i-j)} \mathrm{d}\theta. \label{eq:appdx_bessel_aieow}
\end{align}
According to the integral representation of the Bessel function of the first kind, $J_n(x)=\frac{1}{2\pi}\int_{-\pi}^{\pi}e^{j(n\tau-x\sin\tau)}\mathrm{d}\tau$, 
(\ref{eq:appdx_bessel_aieow}) can be further derived by
\begin{align}\label{eq:appdx_bessel_euirue}
  A_0=\frac{2\pi}{N}\sum_{i,j} J_0(k\Delta d(i-j))e^{jk\Delta d(i-j)\sin\theta_B},
\end{align}
where $A_0$ is the summation of $N\times N$ terms.
To further simplify (\ref{eq:appdx_bessel_euirue}), each of which is denoted by $A_{0,i,j}$,
\begin{align}
	A_{0,i,j}=\frac{2\pi}{N} J_0(k\Delta d(i-j))e^{jk\Delta d(i-j)\sin\theta_B}.
\end{align}
Notice that the only variable across all $A_{0,i,j}$ is the difference $i-j$.
So let $n=i-j$ and it can be derived that
\begin{align}
	A_{0,n}=\frac{2\pi}{N} J_0(k\Delta dn)e^{jk\Delta dn\sin\theta_B}.
\end{align}
Then, all the values of $n$ that are associated with $A_{0,n}$ are mapped into a table shown in Fig.\,\ref{fig:appdx_bessel_table_ULA}.

\begin{figure}
\centering
\includegraphics[scale=1]{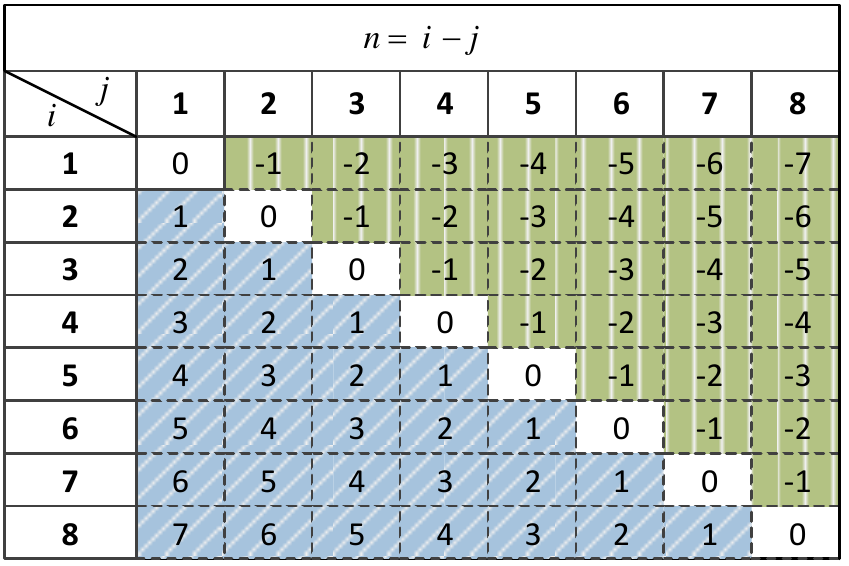}
\caption{Table for $A_{0,i,j}$ shows the symmetry regarding to the diagonal line $i=j$}
\label{fig:appdx_bessel_table_ULA}
\end{figure}

Observing the table in Fig.\,\ref{fig:appdx_bessel_table_ULA}, it is noticed that i) the terms of $A_{0,n}$ on the diagonal lines can be combined, because they are the same; 
ii) becuase $J_m(-x)=(-1)^mJ_m(x)$, the terms of $A_{0,n}$ that have the same absolute value of $n$ can be added
\begin{align}
	&A_{0,n}+A_{0,-n}\nonumber \\
	=&\frac{2\pi}{N} [J_0(k\Delta dn)e^{jk\Delta dn\sin\theta_B}+J_0(-k\Delta dn)e^{-jk\Delta dn\sin\theta_B}] \nonumber \\
	=&\frac{4\pi}{N}J_0(k\Delta dn)\cos(k\Delta dn\sin\theta_B).
\end{align}
In addition, when $n=0$, $J_0(0)=1$ and $e^{j0}=1$.
Thus, $A_{0,0}=\frac{2\pi}{N}$.
Now, sum up the terms of $A_{0,n}$ on each diagonal lines from $n=0$ to $p=N-1$ and obtain
\begin{align}\label{eq:appdx_bessel_ieur}
	A_0=2\pi+4\pi\sum_{n=1}^{N-1} \frac{N-n}{N}J_0(k\Delta dn)\cos(k\Delta dn\sin\theta_B).
\end{align}

\end{document}